\documentclass[12pt]{iopart}

\usepackage{graphicx}
\usepackage{amsfonts}
\usepackage{amsmath}
\usepackage{amsthm}
\usepackage{xcolor} 
\usepackage{soul} 

\newtheorem{thm}{Theorem}[section]
\newtheorem{prop}{Proposition}[section]
\newtheorem{lem}{Lemma}[section]

\newtheorem{defi}{Definition}[section]

\begin{document}
\title[Multipartite entanglement in qudit hypergraph states]{Multipartite entanglement in qudit hypergraph states}
\author{D Malpetti$^1$, A Bellisario$^2$,
C Macchiavello$^3$}

\address{$^1$ Dalle Molle Institute for Artificial Intelligence (IDSIA), Via la Santa 1, 6962 Lugano, Switzerland}
\address{$^2$ Laboratory of Molecular Biophysics, Department of Cell and Molecular Biology, Uppsala University, Husargatan 3 (Box 596), SE-751 24 Uppsala, Sweden}
\address{$^3$ Dipartimento di Fisica and INFN-Sezione di Pavia, Via Bassi 6, 27100 Pavia, Italy}
\ead{chiara.macchiavello@unipv.it}

\begin{abstract}
We study entanglement properties of hypergraph states in arbitrary finite dimension. We compute multipartite entanglement of elementary qudit hypergraph states, namely those endowed with a single maximum-cardinality hyperedge. We show that, analogously to the qubit case, also for arbitrary dimension there exists a lower bound for multipartite entanglement of connected qudit hypergraph states; this is given by the multipartite entanglement of an equal-dimension elementary hypergraph state featuring the same number of qudits as the largest-cardinality hyperedge. We highlight interesting differences between prime and non-prime dimension in the entanglement features.
\end{abstract}
\medskip
\medskip
\medskip
\medskip
\maketitle

\section{Introduction}

Quantum hypergraph states \cite{rossi2013,qu2013} are a class of multipartite states that generalise the class of graph states (for a complete review on graph states see \cite{hein2006}). 
It was recently shown that quantum hypergraph states play a major role in various aspects of quantum information processing and in the foundations of quantum mechanics. For instance, they are employed in many quantum algorithms \cite{bruss2011}, they exhibit interesting nonlocal features \cite{guhne2014} and provide extreme violation of local realism \cite{budroni2016}, and they are a key ingredient in recent proposals of quantum artificial neural networks \cite{Tacchino2019}.

The notion of quantum hypergraph states was recently generalised to systems of arbitrary finite dimension (qudits) \cite{Guehne_2017}, and interesting features were derived in terms of entanglement classes.
Multipartite entanglement is of high interest in multipartite systems, since it is a fundamental resource in several quantum information tasks, such as secret sharing \cite{hillery1999}, multipartite quantum key distribution \cite{multi-qkd} and distributed dense coding \cite{dist-dense}. Multipartite entanglement properties of qubit hypergraph states were recently studied \cite{Ghio_2017}.
In this paper we investigate multipartite entanglement properties for qudit hypergraph states.

The paper is structured as follows. In the first two sections we review previous results in the domain: Sect.\,\ref{sect:intro_hs} is devoted to qudit hypergraph states, and Sect.\,\ref{sect:intro_ent} to multipartite entanglement. In Sect.\,\ref{sect:pi_sets} we develop a mathematical formalism that we use throughout our work.  In Sect.\,\ref{sect:ent_formula} we compute multipartite entanglement of qudit elementary hypergraph states. In Sect.\,\ref{sect:lower_bound} we prove the existence of a lower bound for multipartite entanglement of connected qudit hypergraph states.  We conclude the paper with a summary and outlook in Sect. \ref{s:Conclusions}. The technical derivations of the main results presented in the paper are reported in the Appendices.

\section{Qudit hypergraph states} \label{sect:intro_hs}

In this Section we review elements of the theory of $d$-dimensional quantum systems and introduce qudit hypergraph states. All of the results presented here can be found in Refs.\,\cite{Guehne_2017} and \cite{Xiong_2018}, where qudit hypergraph states where first studied.

In the following, $d$ will always denote an integer larger than one, and $n$ a positive integer. Moreover, we will use $\omega=e^{2 \pi i /d}$ for the $d$-th complex root of unity, $\mathbb{Z}_d$ for the integers modulo $d$, and $\oplus$ for the addition modulo $d$.

\subsection{$d$-dimensional $n$-partite systems}

We consider an $n$-partite quantum system described by the Hilbert space $\mathcal H = \otimes_{i=1}^n \mathcal H_i$, with all the local spaces $\mathcal H_i$ having finite dimension $d$.

Let us focus at first on the local Hilbert spaces. We denote the one-qudit computational basis as $\{|0\rangle,|1\rangle,\dots,|d-1\rangle \}$. We define the local unitary operators 
\begin{equation}
    X = \sum_{q=0}^{d-1} | q \oplus1 \rangle \langle q |\,,\quad
    Z = \sum_{q=0}^{d-1} \omega^q | q \rangle \langle q |\,,
\end{equation}
that generate the $d$-dimensional Pauli group. These have the properties $X^d = Z^d= \mathbb{I}$ and $X^m Z^n = \omega^{-mn} Z^n X^m$. One may notice that for arbitrary dimension they are not self-adjoint in general, whereas for $d=2$ they are and they correspond to the Pauli matrices.

Moving to the full Hilbert space, we define controlled operators among multiple qudits as follows.
Given a local operator $M$ and a set of $r$ qudits with indices $\mathcal I = \{i_1, i_2, \dots, i_r \}$,
the $r$-qudit controlled operator for $M$ is obtained recursively as
\begin{equation}
    M_{\mathcal I} = \sum_{q=0}^{d-1} | q_{i_1} \rangle \langle q_{i_1} | \otimes M_{\mathcal I \setminus \{i_1\} } \,. 
\end{equation}
This procedure permits to build the $r$-qudit controlled phase gate $Z_{\mathcal I}$ from the local unitary $Z$, obtaining 
\begin{equation}
    Z_{\mathcal I} = \sum_{q_{i_1}=0}^{d-1} \dots \sum_{q_{i_r}=0}^{d-1}
    \omega^{q_{i_1}\cdot ... \cdot q_{i_r}} \,
    |q_{i_1}  \dots q_{i_r} \rangle \langle q_{i_1}  \dots q_{i_r} | \,.
\end{equation}
This has again the property $Z_{\mathcal I}^d = \mathbb{I}$, and different controlled phase operators commute.
Moreover, it can be shown that the following identities involving the $X$ Pauli operator hold:
\begin{equation} \label{x_and_z}
    X_k^\dagger Z_{\mathcal I} X_k = Z_{\mathcal I \setminus \{ k \} } Z_{\mathcal I} \,, \quad
    Z_{\mathcal I}^\dagger X_k Z_{\mathcal I}  = X_k Z_{\mathcal I \setminus \{ k \} } \,.
\end{equation}

\subsection{Qudit hypergraph states}

A multi-hypergraph is a pair $H=(V,E)$, where $V$ is a set of vertices and $E$ is a multi-set of hyperedges (a hyperedge is any subset of $V$). In the following, we will always consider non-empty hyperedges. Being $E$ a multiset, a hyperedge $e \in E$ may appear more than once: the number of times it appears is its multiplicity, that we denote as $m_e$. Examples of multi-hypergraphs are depicted in Fig.\,\ref{fig:examples}.

\begin{figure}[t]
    \centering
    \includegraphics[width=0.95\textwidth]{./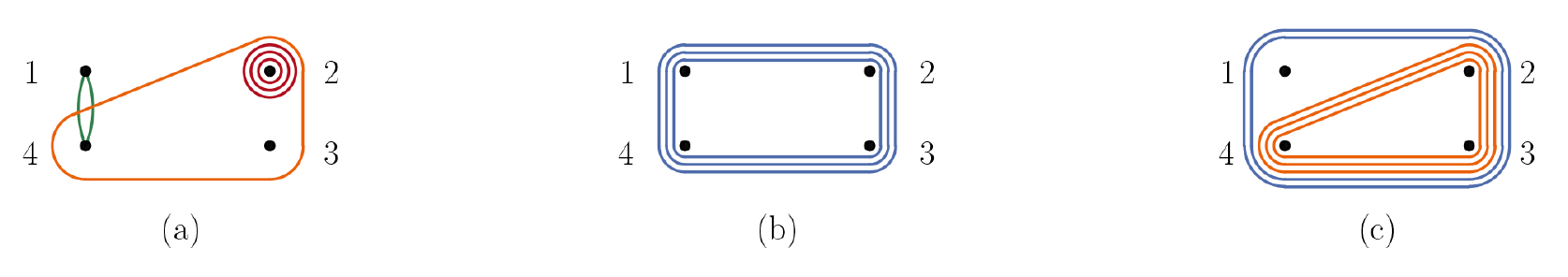}
    \caption{Examples of multi-hypergraphs $H=(V,E)$. Points represent vertices and lines hyperedges. In all cases $V=\{1,2,3,4\}$. In (a), $E=\{ \{2,3,4\}, \{1,4\}^2, \{2\}^3 \}$, with a multiplicity-3 loop on vertex 2. In (b), $E=\{ \{1,2,3,4\}^3 \}$. This is an elementary multi-hypergraph. In (c), $E=\{ \{1,2,3,4\}^2, \{2,3,4\}^3 \}$. The associated quantum states are respectively $Z_{\{2,3,4\}} Z_{\{1,4\}}^2 Z_{\{2\}}^3 | + \rangle^V$, $Z_{\{1,2,3,4\}}^3 | + \rangle^V$, and $Z_{\{1,2,3,4\}}^2 Z_{\{2,3,4\}}^3 | + \rangle^V$.}  
    \label{fig:examples}
\end{figure}

\begin{defi}[Qudit hypergraph state]
Given a dimension $d$, we build the quantum state associated to a multi-hypergraph $H=(V,E)$ as follows
\begin{itemize}
    \item We associate a local state $| + \rangle = d^{-1/2} \sum_{q=0}^{d-1} | q \rangle$ to each vertex in $V$. This gives the global state $| + \rangle^V = \otimes_{i \in V} | + \rangle_i$, corresponding to the empty multi-hypergraph.
    \item For each hyperedge $e \in E$, we apply a controlled phase gate $Z_e$ for $m_e$ times to $| + \rangle^V$.
\end{itemize}
The procedure yields the qudit multi-hypergraph state
\begin{equation}
    | H \rangle = \prod_{e \in E} Z_e^{m_e} | + \rangle^V \,.
\end{equation}
\end{defi}
From now on, we will always refer to qudit multi-hypergraph states simply as hypergraph states. Whenever referring to the special case of qubits, we will explicitly specify that. Moreover, we will use attributes (such as for example elementary or connected) indifferently for a multi-graph and the hypergraph state to it associated. 

Since $Z_e^d = \mathbb I$, for hypergraph states we typically consider $m_e \in \mathbb Z_d$.
Please notice that single-vertex hyperedges (known as loops) are possible, see for example Fig.\,\ref{fig:examples}\,(a). In that case, the controlled phase gate $Z_e$ reduces to the local unitary $Z$.

An important class of hypergraph states is represented by states in the simple form $Z_e^{m_e} | + \rangle^V$, namely those endowed with a single maximum-cardinality hyperedge, Fig.\,\ref{fig:examples}\,(b). In accordance with Ref.\,\cite{Guehne_2017}, we call these elementary hypergraph states. In Sec.\,\ref{sect:ent_formula} we compute their multipartite entanglement.

The notion of hypergraph state can be used to define an orthonormal basis of the global Hilbert space. This is known as the hypergraph state basis.
\begin{defi}[Hypergraph state basis] \label{hg_basis}
Let $| H \rangle$ be an $n$-qudit hypergraph state, the states
\begin{equation} \label{hypergraph_basis}
    |H_{k_1,k_2,\dots, k_n} \rangle := Z_1^{-k_1} Z_2^{-k_2} \dots Z_n^{-k_n}  | H \rangle  \,,
\end{equation}
with $k_i \in \mathbb{Z}_d$ $\forall i$, represent an orthonormal basis for the $n$-partite Hilbert space.
\end{defi}

For the purposes of this work, it is interesting to have a closer look to the effect of Pauli operators on hypergraph states. Let us consider the unitary $X$. The application of an $X^\dagger$ gate on a qudit $k$ of a generic hypergraph state, see Eq.\,\ref{x_and_z}, gives
\begin{equation} \label{x_effect}
    X_k^\dagger \prod_{e \in E} Z_e^{m_e} | + \rangle^V =
    \prod_{e \in E} Z_e^{m_e} Z_{e \setminus \{k\}}^{m_e} | + \rangle^V \,.
\end{equation}
This amounts to creating, for each hyperedge containing $k$, an additional hyperedge with the same multiplicity but not containing $k$.
Let us consider now the unitary $Z$. As already observed, it is not self-adjoint for arbitrary dimension, and a measurement of $Z$ itself is not possible. One may measure instead an observable $M_Z = \sum_q \mu_q | q \rangle \langle q |$, with $\mu_q$ real. It can be shown that a measurement of $M_Z$ on qudit $k$ with outcome $\mu_q$ projects a generic hypergraph state $\prod_e Z_e^{m_e} | + \rangle^V$ onto the state
\begin{equation} \label{z_effect}
    |q_k \rangle \otimes 
    \prod_e Z_{e \setminus \{k\}}^{m_e \cdot q} | + \rangle^{V \setminus \{k\}} \,.
\end{equation}
In this process the hyperedges containing $k$ are replaced by new hyperedges not containing $k$ and with a multiplicity that depends on the outcome of the measurement.

Before concluding this section, it is worth mentioning that a characterization of hypergraph states in terms of stabilizer operators is also possible \cite{Guehne_2017}. We do not introduce it here, since the use of stabilizers is not within the scope of this work.

\section{Multipartite entanglement} \label{sect:intro_ent}

We briefly recall definitions and well-known results concerning both bipartite and multipartite entanglement. For a detailed discussion of these topics, we refer to Ref.\,\cite{horodecki_2009}.

\begin{defi}[Bipartite and multipartite entanglement]
    Let $| \psi_n \rangle \in \mathcal H_n$ be a pure state composed of $n$ subsystems. And let  $AB$ be a generic bipartition of the system into $A=\{1,2,\dots,k\}$ and $B=\{k+1,\dots,n\}$, with $1\le k<n$. We define the bipartite entanglement of $| \psi_n \rangle$ with respect to the bipartition $AB$ as
    \begin{equation} \label{def_bipartite}
        E^{AB} ( | \psi_n \rangle ) := 1 - 
        \max_{|\phi^A \rangle | \phi^B \rangle}
        \left| ( \langle \phi^A | \langle  \phi^B | ) | \psi_n \rangle \right|^2 \equiv 
        1 - \alpha^{AB} ( | \psi_n \rangle ) \,,
    \end{equation}
    where the maximum is over all the pure states in $\mathcal H_n$ which are separable with respect to the bipartition $AB$. We define the multipartite entanglement of $| \psi_n \rangle$ as the minimum bipartite entanglement with respect to all possible bipartitions $AB$, namely as
    \begin{equation} \label{def_multipartite}
        E ( | \psi_n \rangle ) := \min_{AB} E^{AB} ( | \psi_n \rangle ) =
        1 - \max_{AB} \alpha^{AB} ( | \psi_n \rangle ) 
        \equiv 1 - \alpha ( | \psi_n \rangle ) \,.
    \end{equation}
    Here $\alpha^{AB} ( | \psi_n \rangle )$ and $\alpha ( | \psi_n \rangle )$ denote respectively the maximum overlap with the pure states in $\mathcal H_n$ which are separable with respect to $AB$, and the maximum overlap with the full set of pure biseparable states in $\mathcal H_n$.
\end{defi}

Interestingly, $\alpha ( | \psi_n \rangle )$ does not need to be calculated by direct maximization of the overlap, but it can also be obtained as the maximum eigenvalue of the reduced states of $| \psi_n \rangle$ \cite{bourenanne_2004}. In Sec.\,\ref{sect:ent_formula} we use this method to calculate the maximum overlap with pure biseparable states (and then the multipartite entanglement) of elementary hypergraph states.

Moreover, as demanded to any good measure of entanglement, both bipartite and multipartite entanglement are invariant under local unitaries (LU) and non-increasing under local operations and classical communication (LOCC) \cite{Plenio2007AnIT}. This is fundamental for the procedure that we develop in Sec.\,\ref{sect:lower_bound}.

\section{String-sets} \label{sect:pi_sets}

In this section we develop a formalism that will be a convenient framework for the entanglement calculations of Sec.\,\ref{sect:ent_formula}. It is based on sets of strings that we call ``string-sets". 
In order to reduce the extent of technicalities in the discussion, all of the results concerning the cardinalities of string-sets are provided here without a proof. The interested reader shall refer to the appendices for details. In Appendix A the general formula for cardinality is derived, whereas Appendix B deals with properties of the cardinalities. The connection between string-sets and quantum states is shown at the end of the section.

\subsection{Definitions}

We give the fundamental definitions of the formalism. 
As a general remark, please notice that in many cases, for the sake of simplicity in the notation, the dimension $d$ is not explicitly marked on mathematical objects. For example, we use $S(n)$ instead of a more informative but heavier $S_d(n)$, or $\zeta_x(n)$ instead of $\zeta_{x,d}(n)$. This should not lead to ambiguities in any case.
Moreover, we use  $\gcd$ for the greatest common divisor and $x|y$ for ``$x$ is a divisor of $y$".

\begin{defi}[$n$-string]  \label{def_nstring}
We define an $n$-string $s$ as a collection of $n$ non-negative integers smaller than $d$:
\begin{equation}
s := \{q_1, q_2, \dots , q_n \}, 
\quad 
\text{with} 
\quad
q_i \in \mathbb{Z}_d \;\; \forall i \; .
\end{equation}
We call $S(n)$ the full set of $n$-strings. 
\end{defi}
One may notice that the cardinality of $S(n)$ is $|S(n)|=d^n$.
\begin{defi}[String-set] \label{def_pi_set}
We define $\zeta_x(n)$ as the set of $n$-strings such that the product of their elements is congruent $\bmod \, d$ to an integer $x$:
\begin{equation}
\zeta_x (n) := \left\{ s \in S(n) \; : \; 
\prod_{i = 1}^n s[i] \equiv x 
\mod d \right\} \,, 
\end{equation}
where $s[i]$ denotes the $i$-th element of the string $s$.
\end{defi}
It follows naturally, for example, that $\zeta_x (n) = \zeta_{x+d} (n)$ $\forall x$ (the freedom in the choice of indices implied by this equality is particularly useful for the calculations in the Appendix). Moreover, $\cup_{i=0}^{d-1} \zeta_i (n) = S(n)$ and $\sum_{i=0}^{d-1} |\zeta_i (n)| = |S(n)|=d^n$. 

\subsection{Cardinalities}

Deriving the analytical expression for the cardinality of string-sets is a straightforward task for $d$ prime, but it involves a certain amount of technicalities for $d$ non-prime. We present here the general formula and refer to Appendix A for the derivation. 

\begin{thm}[Cardinality of string-sets] \label{thm_card_generic}
Let us consider a dimension $d=\prod_{i \in \mathcal J} p_i^{\ell_i}$, with $\{ p_{i \in \mathcal J} \}$ the prime factors of $d$, and a non-negative integer $x$ such that $gcd(x,d)=\prod_{i \in \mathcal J} p_i^{k_i}$. $\mathcal J$ is the union of two sets: $\mathcal J_1$, containing the indices $i$ such that $k_i < \ell_i$, and $\mathcal J_2$, containing those such that $k_i = \ell_i$. The cardinality of the set $\zeta_x (n)$ is
\begin{gather} \label{main_pi_x_card}
\begin{split}
    | \zeta_x (n) | 
    = &\varphi^{n-1} (d) 
    \left[ \prod_{i \in \mathcal J_1} \binom{n+k_i-1}{k_i} \right]
    \left[ \prod_{i \in \mathcal J_2} 
    \sum_{j=0}^{n-1} 
    \binom{n+\ell_i-2-j}{\ell_i-1}
    \left( 1 - \frac{1}{p_i}  \right)^{-j}
    \right] \,,
\end{split}
\end{gather}
where $\varphi$ is Euler's totient function\footnote{We recall that, for a $d$ as in the theorem,
$ \displaystyle
    \varphi \left( d \right) =
    \varphi \left( \prod_{i \in \mathcal  J} p_i^{\ell_i} \right)=
    \prod_{i \in \mathcal  J} \varphi \left(  p_i^{\ell_i} \right) =
    d \, \prod_{i \in \mathcal  J} \left( 1 - \frac{1}{p_i} \right) \,.
$}.
\end{thm}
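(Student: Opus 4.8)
The plan is to reduce the count to prime powers via the Chinese Remainder Theorem and then, in each prime‑power factor $p^\ell$, treat separately the case $p^\ell\nmid x$ (which will produce the $\mathcal J_1$ binomial) and the case $p^\ell\mid x$ (which will produce the $\mathcal J_2$ sum), closing with an elementary summation identity.

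First I would use the ring isomorphism $\mathbb Z_d\cong\prod_{i\in\mathcal J}\mathbb Z_{p_i^{\ell_i}}$. Under it an $n$-string $s$ corresponds to a tuple of $n$-strings $s^{(i)}$ over $\mathbb Z_{p_i^{\ell_i}}$, and the product $\prod_j s[j]$ is computed componentwise, so $|\zeta_x(n)|=\prod_{i\in\mathcal J}|\zeta^{(i)}_{x_i}(n)|$, where $x_i\equiv x\bmod p_i^{\ell_i}$ and $\zeta^{(i)}$ is the string‑set in dimension $p_i^{\ell_i}$. By hypothesis the $p_i$-adic valuation of $x_i$ equals $k_i$ when $i\in\mathcal J_1$, while $x_i=0$ when $i\in\mathcal J_2$. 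Since $\varphi(d)^{n-1}=\prod_i\varphi(p_i^{\ell_i})^{n-1}$, it suffices to prove: for a prime power $p^\ell$ and any fixed $y\in\mathbb Z_{p^\ell}$ of $p$-adic valuation $k$ (with the convention $k=\ell\Leftrightarrow y=0$), the number $N(p^\ell,k,n)$ of $n$-strings over $\mathbb Z_{p^\ell}$ with product $\equiv y$ equals $\binom{n+k-1}{k}\varphi(p^\ell)^{n-1}$ when $k<\ell$, and $\varphi(p^\ell)^{n-1}\sum_{j=0}^{n-1}\binom{n+\ell-2-j}{\ell-1}(1-1/p)^{-j}$ when $k=\ell$.

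For the case $k<\ell$ I would write every $a\in\mathbb Z_{p^\ell}$ of valuation $c<\ell$ uniquely as $a=p^{c}u$ with $u$ a unit mod $p^{\ell-c}$. A string with product of valuation $k<\ell$ must have all entries nonzero and valuations $c_1,\dots,c_n$ summing exactly to $k$; fixing such a composition, the condition product $\equiv y$ reads $\prod_j u_j\equiv y/p^{k}\pmod{p^{\ell-k}}$, which, the unit‑product map being a surjective homomorphism onto $(\mathbb Z/p^{\ell-k})^{\times}$, has $\varphi(p^{\ell-k})^{n-1}$ solutions modulo $p^{\ell-k}$, each lifting in $\prod_j\varphi(p^{\ell-c_j})/\varphi(p^{\ell-k})=p^{k(n-1)}$ ways (using $c_j\le k$). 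Since $p^{k}\varphi(p^{\ell-k})=\varphi(p^\ell)$, every composition contributes exactly $\varphi(p^\ell)^{n-1}$, and there are $\binom{n+k-1}{k}$ of them by stars and bars; this gives the $\mathcal J_1$ factor. The same bookkeeping, now with no clipping since $c_j\le m\le\ell-1$, shows that the number of all‑nonzero $n$-strings with valuation sum exactly $m\le\ell-1$ is $\binom{n+m-1}{m}(p-1)^{n}p^{n(\ell-1)-m}$. For the case $k=\ell$, i.e.\ product $\equiv0\pmod{p^\ell}$, I would count by complementation: a string fails this precisely when all entries are nonzero with valuation sum at most $\ell-1$, so $N(p^\ell,\ell,n)=p^{n\ell}-(p-1)^{n}p^{n(\ell-1)}\sum_{m=0}^{\ell-1}\binom{n+m-1}{m}p^{-m}$. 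Dividing by $\varphi(p^\ell)^{n-1}=(p-1)^{n-1}p^{(\ell-1)(n-1)}$ and using $n\ell-(\ell-1)(n-1)=n+\ell-1$, the claim becomes the elementary identity $p^{n+\ell-1}(p-1)^{1-n}-(p-1)p^{\ell-1}\sum_{m=0}^{\ell-1}\binom{n+m-1}{m}p^{-m}=\sum_{j=0}^{n-1}\binom{n+\ell-2-j}{\ell-1}(p/(p-1))^{j}$, which I would prove by induction on $n$ through Pascal's rule, both sides being truncations of the negative‑binomial expansion of $(1-1/p)^{-n}$.

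The easy parts are the CRT reduction and the $k<\ell$ count, which is just the fiber‑size argument for a surjective homomorphism together with stars and bars. I expect the main obstacle to be the $k=\ell$ case: getting the valuation bookkeeping exactly right (the zero residue is the unique element of valuation $\ge\ell$, and any partial valuation sum that reaches $\ell$ forces the product to vanish regardless of the unit parts, so the only quantity to subtract is the ``all nonzero, small valuation sum'' count), and then verifying the finite summation identity that recasts the truncated negative‑binomial sum in the advertised closed form. That identity is the genuinely computational core and is best isolated as a lemma in the appendix.
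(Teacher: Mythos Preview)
Your argument is correct but follows a genuinely different route from the paper. The paper never invokes the Chinese Remainder Theorem up front; instead it first establishes the recursion $|\zeta_x(n)|=\sum_{r\in\mathcal C(x,d)}|\zeta_r(n-1)|\,r\,\varphi(d/r)$ (via a lemma on how many $q_n\in\mathbb Z_d$ push a string in $\zeta_r(n-1)$ into $\zeta_x(n)$), iterates it down to $n=1$, and only then factorises the resulting nested sum over the primes of $d$ using the identity $r\,\varphi(d/r)=\varphi(d)\prod_i(1-1/p_i)^{-\delta_{k_i,\ell_i}}$; the $\mathcal J_1$ binomials and the $\mathcal J_2$ sums drop out from counting chains $r_1\mid r_2\mid\cdots$ of divisors. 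Your approach trades that recursion for a direct $p$-adic valuation count in each prime-power factor, which is arguably more transparent: the $\mathcal J_1$ factor is just ``fibres of a surjective group homomorphism times stars-and-bars'', and the $\mathcal J_2$ factor is complementation. Interestingly, the two routes converge on the same computational core: your closing identity is, after dividing by $(p-1)p^{\ell-1}$, literally the paper's Lemma~B.1, $p^{-\ell}\sum_{j=0}^{n-1}\binom{n+\ell-2-j}{\ell-1}(1-1/p)^{-(j+1)}=(1-1/p)^{-n}-\sum_{j=0}^{\ell-1}\binom{n+j-1}{j}p^{-j}$, which the paper proves via the Vandermonde-type convolution $\sum_{j=0}^{r}\binom{r+s-j}{s}\binom{j+k}{k}=\binom{r+s+k+1}{s+k+1}$ rather than by induction on $n$. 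Your CRT-first route is cleaner for this theorem in isolation; the paper's recursive formula has the side benefit of being reused elsewhere in the appendix.
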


Such expression, in a few special cases, takes a much simpler form:
\begin{itemize}
    \item For any $x$, $|\zeta_x(1)|=1$.
    \item For $x$ coprime to $d$,
    \begin{equation} \label{pi_coprime}
    | \zeta_{x} (n) | = \varphi^{n-1}(d) 
    =   d^{n-1} \prod_{i \in \mathcal J}  \left( 1 - \frac{1}{p_i}  \right)^{n-1} \,.
    \end{equation}
    \item For $d$ prime,
    \begin{equation}
    | \zeta_{x} (n) | = \varphi^{n-1}(d) = (d-1)^{n-1} \quad \text{if } x \not\equiv 0 \mod d\,,
    \end{equation}
    and
    \begin{equation}
    | \zeta_{x} (n) | = | \zeta_{0} (n) | = 
    d^n - \sum_{x=1}^{d-1} | \zeta_{x} (n) | = d^n - (d-1)^{n}   \quad \text{if } x \equiv 0 \mod d\,.
    \end{equation}
\end{itemize}

It is also worth noticing that, for an $x$ as in Theor.\,\ref{thm_card_generic}, the cardinality of the set $ \zeta_{x} (n) $ can be factorized as
\begin{equation} \label{pi_factorization}
    | \zeta_{x} (n) | = \left| \zeta_{ \prod_{i \in \mathcal J} p_i^{k_i} } (n) \right| = 
    | \zeta_1 (n) |
    \prod_{i \in \mathcal J} \frac{ \left| \zeta_{ p_i^{k_i} } (n) \right|}{| \zeta_1 (n) |} \,.
\end{equation}

\subsection{Properties of cardinalities}  \label{pi_properties}

It can be shown  that the following identities involving the cardinalities of string-sets hold
(see Appendix B for details)

\begin{itemize}
    \item For $\alpha$ and $0<k<n$ integers,
\begin{equation}  \label{k_identity}
    \sum_{r=0}^{d-1} \omega^{\alpha \, r} \vert \zeta_r (n) \vert  
    = \sum_{x=0}^{d-1} \sum_{y=0}^{d-1} \omega^{\alpha \, xy}
    \vert \zeta_{x} (k) \vert \vert \zeta_{y} (n-k) \vert \, .
\end{equation}
\item For $\alpha$  and $n>1$ integers, 
\begin{equation} \label{real_positive} 
    d \, \vert \zeta_0(n-1) \vert \le 
    \sum_{r=0}^{d-1} \omega^{\alpha \, r} \vert \zeta_r (n) \vert  \le 
    d^n     \,.
\end{equation}
Notice that the sum in $r$ is a real number.
\item For a dimension $d= \prod_{i \in \mathcal J} p_{i}^{\ell_i}$, with $\{p_{i \in \mathcal J}\}$ the prime factors of $d$, and $m \in \mathbb{Z}_d$ such that $\gcd(m,d)=\prod_{i \in \mathcal J} p_{i}^{k_i}$,
\begin{equation} \label{alpha_identity}
   \frac{1}{d^n} 
  \sum_{y=0}^{d-1}  \sum_{t=0}^{d-1} \omega^{-m \, ty}  \vert \zeta_{y} (n-1) \vert =
   \prod_{i \in \mathcal J} 
    \left[ 1- (1- \delta_{k_i, \ell_i} ) \left( 1 - \frac{1}{p_i} \right)^{n}
    \sum_{j=0}^{\ell_i-k_i-1} p_i^{-j} \binom{n+j-1}{j}
    \right] \,,
\end{equation}
where $\delta$ denotes the Kronecker delta.
\end{itemize}

These three results will be used in the following section, in the proof of Theor.\,\ref{elementary_entanglement}.

\subsection{$n$-strings and quantum states}
We observe that, for given $d$, there is a one-to-one correspondence between $n$-strings in $S(n)$ and elements of the $n$-qudit computational basis. It seems therefore natural to denote a state $|q_1 q_2 \dots q_n \rangle$ as $|s\rangle$, using the $n$-string $s = \{q_1, q_2, \dots, q_n\} \in S(n)$. In the following, we will use this notation.

\section{Multipartite entanglement of elementary hypergraph states} \label{sect:ent_formula}

In this section, we calculate multipartite entanglement of elementary hypergraph states. These are hypergraph states endowed with a single maximum-cardinality hyperedge, in the form 
\begin{equation}
    | G_n^{m_e} \rangle = Z_e | + \rangle^V = \sum_{q_1=0}^{d-1} \dots \sum_{q_n=0}^{d-1}
    \omega^{q_1 \cdot ... \cdot q_n} | q_1 \dots q_n \rangle \,.
\end{equation}

Throughout all the section, we make use of the formalism and the results presented in Sec.\,\ref{sect:pi_sets}. We also make use of the following matrix norm.

\begin{defi}[Infinity norm]
Let $M \in \mathbb C^{n \times n}$ be a square matrix. We define its infinity norm as 
\begin{equation}
    \Vert M \Vert_\infty := \max_{i \in \{1,2,\dots,n\} }
    \sum_{j=1}^n | M_{ij} | \,.
\end{equation}
\end{defi}
It can be shown  that, for $M\ge0$, its maximum eigenvalue $\lambda_{\max}$ is upper bounded by the infinity norm as $\lambda_{\max} \le \Vert M \Vert_\infty$ \cite{horn_johnson_2012}. 

Before moving into details, we outline the general procedure behind our proof using a generic $n$-partite state $|\psi_n\rangle$. The reasoning that we follow is analogous to some of the proofs in Ref.\,\cite{Ghio_2017}, where the qubit case was investigated. Here we generalise this derivation to generic dimension.

\begin{itemize}
    \item \textit{I.} We consider an $(n-1)$-to-$1$ bipartition $\bar A = \{1,2,\dots,n-1\}$, $\bar B = \{n\}$. We derive the Schmidt decomposition of $| \psi_n \rangle$ with respect to the bipartition $\bar A \bar B$. The maximal squared Schmidt coefficient $\left(s^{\bar A \bar B}_{\max} (| \psi_n \rangle)\right)^2$ is the maximal eigenvalue of the $(n-1)$-qudit reduced state \cite{nielsen_chuang}.
    \item \textit{II.} We consider an $(n-k)$-to-$k$ bipartition $A = \{1,2,\dots,n-k\}$, $B = \{n-k+1,\dots,n\}$, with $k>1$. We construct the reduced density matrix corresponding to $n-k$ qudits and calculate its infinity norm $\Vert \rho^{(1,2,\dots,n-k)} \Vert_\infty$. For the observation above, this represents an upper bound for the eigenvalues of the reduced state. 
    \item \textit{III.} We observe that $\left( s^{\bar A \bar B}_{\max} (| \psi_n \rangle) \right)^2 \ge \Vert \rho^{(1,2,\dots,n-k)} \Vert_\infty $ $\forall k$. Since the maximum overlap of $|\psi_n \rangle$ with pure biseparable states, $\alpha(|\psi_n \rangle)$, can be obtained as the maximum of all the eigenvalues of the reduced states \cite{bourenanne_2004}, we conclude that $\alpha(|\psi_n \rangle)=\left(s^{\bar A \bar B}_{\max} (| \psi_n \rangle) \right)^2$. We finally obtain multipartite entanglement by definition as $E(|\psi_n \rangle) = 1- \alpha(|\psi_n \rangle)$.
\end{itemize}


\begin{thm} 
[Multipartite entanglement - elementary hypergraph states] \label{elementary_entanglement}
Let us consider a dimension $d=\prod_{i \in \mathcal J} p_i^{\ell_i}$, with $\{ p_{i \in \mathcal J} \}$ the prime factors of $d$. Let $\vert G_n^{m_e} \rangle$ be an $n$-qudit elementary hypergraph state with hyperedge multiplicity $m_e \in \mathbb{Z}_d$ such that $\gcd(m_e, d)=\prod_{i \in \mathcal J} p_{i}^{k_i}$. 
The maximum squared overlap between $\vert G_n^{m_e} \rangle$ and the pure biseparable states is 
    \begin{equation} \label{alpha_general}
    \alpha \left( \vert G_n^{m_e} \rangle \right) = 
    \prod_{i \in \mathcal J} 
    \left[ 1- (1- \delta_{k_i, \ell_i} ) \left( 1 - \frac{1}{p_i} \right)^{n-1}
    \sum_{j=0}^{\ell_i-k_i-1} p_i^{-j} \binom{n+j-2}{j}
    \right] \,,
    \end{equation}
and the multipartite entanglement of $\vert G_n^{m_e} \rangle$ is 
    \begin{equation} \label{entanglement_general}
    E \left( \vert G_n^{m_e} \rangle \right) = 1-  
    \prod_{i \in \mathcal J} 
    \left[ 1- (1- \delta_{k_i, \ell_i} ) \left( 1 - \frac{1}{p_i} \right)^{n-1}
    \sum_{j=0}^{\ell_i-k_i-1} p_i^{-j} \binom{n+j-2}{j}
    \right] \,.
    \end{equation}
\end{thm}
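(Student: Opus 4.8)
The plan is to follow the three-step scheme \textit{I}--\textit{III} set out above, specialised to $\vert G_n^{m_e}\rangle$. Two structural facts drive it. First, the amplitude $\omega^{m_e q_1\cdots q_n}$ of $\vert G_n^{m_e}\rangle$ is symmetric in $q_1,\dots,q_n$, so the state is invariant under permutations of the $n$ qudits; hence for each $k$ with $1\le k<n$ every $(n-k)$-to-$k$ bipartition has the same reduced-state spectrum, and it suffices to treat $A=\{1,\dots,n-k\}$. Second, once a set of qudits is traced out, grouping the traced computational-basis strings according to the residue $\bmod\,d$ of the product of their entries produces multiplicities equal to the string-set cardinalities $\vert\zeta_x(\cdot)\vert$ of Sect.\,\ref{sect:pi_sets}; the reduced states below are therefore entirely built from the $\vert\zeta_x\vert$, and the identities of Sect.\,\ref{pi_properties} can be applied.

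In step \textit{I} (the $(n-1)$-to-$1$ cut), tracing out qudits $1,\dots,n-1$ and grouping by the residue $r$ of $q_1\cdots q_{n-1}$ yields the circulant matrix $\rho^{(n)}_{q,q'}=d^{-n}\sum_{r=0}^{d-1}\vert\zeta_r(n-1)\vert\,\omega^{m_e r(q-q')}$, whose generator $c(t)=d^{-n}\sum_r\vert\zeta_r(n-1)\vert\,\omega^{m_e r t}$ is real and non-negative by Eq.\,\ref{real_positive}. A non-negative circulant matrix has its largest eigenvalue equal to its (constant) row sum, so $\left(s^{\bar A \bar B}_{\max}(\vert G_n^{m_e}\rangle)\right)^2=\lambda_{\max}(\rho^{(n)})=\sum_t c(t)=d^{-(n-1)}\sum_{r:\,m_e r\equiv 0}\vert\zeta_r(n-1)\vert=:\lambda_0$, which via Eq.\,\ref{k_identity} with $k=1$ equals $d^{-n}\sum_r\vert\zeta_r(n)\vert\,\omega^{m_e r}$ as well. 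Feeding in Theor.\,\ref{thm_card_generic} together with Eq.\,\ref{pi_factorization}, or equivalently applying Eq.\,\ref{alpha_identity} with $m=m_e$, rewrites $\lambda_0$ as the product on the right-hand side of Eq.\,\ref{alpha_general}.

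In step \textit{II} (the $(n-k)$-to-$k$ cut with $1<k<n$), the same reduction gives $\rho^{(1,\dots,n-k)}_{\vec a,\vec a'}=d^{-n}\sum_\beta\vert\zeta_\beta(k)\vert\,\omega^{m_e(a-a')\beta}$, where $a,a'$ denote the products $\bmod\,d$ of the entries of $\vec a,\vec a'$; each entry is real and non-negative by Eq.\,\ref{real_positive}, so the infinity norm is a maximum of ordinary row sums. Grouping the columns of the row with product $u$ by the product value $v$ of $\vec a'$ (each occurring $\vert\zeta_v(n-k)\vert\ge 1$ times) and rearranging, the row sum is $R(u)=d^{-n}\sum_\beta\vert\zeta_\beta(k)\vert\,T_\beta\,\omega^{m_e u\beta}$ with $T_\beta=\sum_v\vert\zeta_v(n-k)\vert\,\omega^{-m_e\beta v}\ge 0$ (again Eq.\,\ref{real_positive}); since $R(u)$ is real and $\vert\zeta_\beta(k)\vert T_\beta\ge 0$, bounding each $\omega^{m_e u\beta}$ by its real part $\le 1$ gives $R(u)\le R(0)$, hence $\Vert\rho^{(1,\dots,n-k)}\Vert_\infty=R(0)$. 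Finally $R(0)=d^{-n}\sum_{\beta,v}\vert\zeta_\beta(k)\vert\vert\zeta_v(n-k)\vert\,\omega^{-m_e\beta v}=d^{-n}\sum_r\vert\zeta_r(n)\vert\,\omega^{m_e r}=\lambda_0$ by Eq.\,\ref{k_identity} and the alternative form of $\lambda_0$ found above.

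For step \textit{III}, combining the two: for every $1<k<n$, $\lambda_{\max}(\rho^{(1,\dots,n-k)})\le\Vert\rho^{(1,\dots,n-k)}\Vert_\infty=\lambda_0=\lambda_{\max}(\rho^{(n)})$, so the largest eigenvalue among all reduced states of $\vert G_n^{m_e}\rangle$ is $\lambda_0$, attained already at the $(n-1)$-to-$1$ cut. Since $\alpha(\vert G_n^{m_e}\rangle)$ equals this largest reduced-state eigenvalue \cite{bourenanne_2004}, $\alpha(\vert G_n^{m_e}\rangle)=\lambda_0$, which is Eq.\,\ref{alpha_general}, and then Eq.\,\ref{entanglement_general} follows from $E=1-\alpha$. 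I expect the real obstacle to be step \textit{II}, i.e.\ showing the single-qudit bipartition is the least entangled one, $\Vert\rho^{(1,\dots,n-k)}\Vert_\infty\le\lambda_0$: this leans on the positivity of the relevant character sums (Eq.\,\ref{real_positive}), which lets the absolute values collapse so that the crude $\cos\le 1$ estimate suffices, and on the convolution identity Eq.\,\ref{k_identity}, which pins the $u=0$ row sum to $\lambda_0$. The cardinality results themselves (Theor.\,\ref{thm_card_generic} and Eqs.\,\ref{k_identity}, \ref{real_positive}, \ref{alpha_identity}) are the technically heavy ingredients and are taken as given here, being established in the appendices.
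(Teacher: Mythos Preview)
Your proposal is correct and follows essentially the same three-part architecture as the paper: identify the largest reduced-state eigenvalue at the $(n-1)$-to-$1$ cut, bound the general $(n-k)$-to-$k$ cut by the infinity norm of the reduced density matrix using the non-negativity from Eq.\,\ref{real_positive}, show via Eq.\,\ref{k_identity} that this infinity norm coincides with the $(n-1)$-to-$1$ value, and then invoke Eq.\,\ref{alpha_identity} for the closed form. The only stylistic difference is in Part~I: the paper writes out the Schmidt decomposition on the $(n-1)$-qudit side and identifies the maximal coefficient $\mathcal N_0/d^{n-1}$ directly, whereas you trace down to the single-qudit side and read off the largest eigenvalue of the resulting $d\times d$ circulant as its (constant) row sum; both routes yield the same $\lambda_0=d^{-n}\sum_{y,t}\omega^{m_e ty}\vert\zeta_y(n-1)\vert$.
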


\begin{proof}
We split the proof into three different parts, corresponding to the three points outlined above.
\vspace{\baselineskip}

\noindent \textit{Part I: Maximal squared Schmidt coefficient for ($n-1$)-to-$1$ bipartition.} \\
Let us consider the bipartition $\bar A = \{1,2,...,n-1\}$ and $\bar B = \{n\}$. Denoting the product of the elements of a string $s$ as $P(s)$, the state $\vert G_n^{m_e} \rangle$ can be written as
\begin{gather}
\begin{split}
\vert G_n^{m_e} \rangle
&= \displaystyle \frac{1}{\sqrt{d^n}} 
\sum_{s \in S(n-1) } \sum_{q= 0}^{d-1} 
\omega^{m_e \, P(s) \, q} 
\vert s \rangle \vert q \rangle \\
&= \displaystyle \frac{1}{\sqrt{d^n}} 
\sum_{y =0 }^{d-1}
\sum_{s \in \zeta_y (n-1) } \sum_{q = 0}^{d-1} 
\omega^{m_e \, y \, q} 
\vert s \rangle \vert q \rangle \\
&= \displaystyle \frac{1}{\sqrt{d^n}} 
\sum_{y =0 }^{d-1}
\left( 
\sum_{s \in \zeta_y (n-1) } \vert s \rangle 
\right)
\left( \sum_{q = 0}^{d-1} 
\omega^{m_e \, y q} 
\vert q \rangle \right) \,.
\end{split}
\end{gather}
Introducing the local orthonormal basis $| h_k \rangle := d^{-1/2} \, \sum_{q=0} \omega^{k \, q} | q \rangle $, with $k \in \mathbb Z_d$, we have 
\begin{equation} 
\vert G_n^{m_e} \rangle = \displaystyle \frac{1}{\sqrt{d^{n-1}}}\sum_{y=0}^{d-1}
\left( \sum_{s \in \zeta_y (n-1) } \vert s \rangle \right)  \vert h_{m_e y  \bmod  d} \rangle  \,.
\end{equation}  
For $d$ prime, by varying the index $y$ in the sum, $\vert h_{m_e y \, \bmod  d} \rangle$ covers all of the elements of the local basis. But this is not the case for $d$ non-prime. Whenever two indices $y,y'$ are such that $m_e y \equiv m_e y' \mod  d$, then $\vert h_{m_e y \, \bmod  d} \rangle = \vert h_{m_e y' \, \bmod  d} \rangle $.
By exploiting the identity \cite{ConcreteMathematics}
\begin{equation}
    m_e y \equiv m_e y' \mod  d 
    \quad \Leftrightarrow \quad y \equiv y'  \mod  d/\gcd(m_e,d) \,,
\end{equation}
we express $\vert G_n^{m_e} \rangle$ as 
\begin{equation} 
\vert G_n^{m_e} \rangle = 
\frac{1}{\sqrt{d^{n-1}}} 
\sum_{x =0}^{d/\gcd (m_e,d)-1}
\left( 
\sum_{y =0}^{d-1} \delta_{x, \, m_e y \bmod d} \, 
\sum_{s \in \zeta_y (n-1) } \vert s \rangle 
\right) \vert h_x \rangle \, .
\end{equation}
This equation, with the introduction of normalization factors, leads to the Schmidt decomposition 
\begin{equation} 
\vert G_n^{m_e} \rangle =  
\sum_{x =0}^{d/\gcd (m_e,d)-1} 
\frac{ \sqrt{\mathcal N_x} }{\sqrt{d^{n-1}}} 
\cdot 
\frac{\sum_{y = 0}^{d-1} \delta_{x, \, m_e y \bmod d} \, \sum_{s \in \zeta_y (n-1) } \vert s \rangle }{ \sqrt{\mathcal N_x}} \vert h_x \rangle \, ,
\end{equation}
where 
\begin{equation}
    \mathcal N_x = \sum_{y = 0}^{d-1} \delta_{x, \, m_e y \bmod d} \, \vert \zeta_{y} (n-1) \vert \,.
\end{equation}
It is interesting to notice that the Schmidt rank is $d/\gcd (m_e,d)$, in accordance with the results in Ref.\,\cite{Guehne_2017}. In order to identify the largest squared Schmidt coefficient $\sqrt{\mathcal N_x} / \sqrt{d^{n-1}}$, we should maximize $\mathcal N_x$. Since, for a sum of powers of complex roots of unity \cite{Ledermann_1967} the following identity holds
\begin{equation} \label{delta}
    \delta_{x, \, m_e y \bmod d} = 
    \frac{1}{d} \sum_{t=0}^{d-1} \omega^{(m_e y-x)t} \,,
\end{equation}
$\mathcal N_x$ can be reshaped as
\begin{gather}
    \begin{split}
        \mathcal N_x
        &= \frac{1}{d} \sum_{t=0}^{d-1} \omega^{-xt} 
        \sum_{y =0}^{d-1} \omega^{m_e \, yt} \vert \zeta_{y} (n-1) \vert \,.
    \end{split}
\end{gather}
This expression makes it easier to see that, being the sum over $y$ real positive for any $t$ (Eq.\,\ref{real_positive}), $\mathcal N_x$ is maximal for $x=0$. The largest squared Schmidt coefficient is then 
\begin{equation} \label{largest_schmidt}
\left( s_{max}^{\bar A \bar B}(\vert G_n^{m_e} \rangle) \right)^2 =
\frac{1}{d^n} 
  \sum_{y=0}^{d-1}  \sum_{t=0}^{d-1} \omega^{m_e \, ty}  \vert \zeta_{y} (n-1) \vert  \, .
\end{equation}

\vspace{\baselineskip}

\noindent \textit{Part II: Infinity norm for $(n-k)$-qudit reduced density matrix.}\\
Let us consider a bipartition $A = \{1,2...,n-k\}$ and $B=\{n-k+1,...,n \}$, with $k > 1$.
The state $\vert G_n^{m_e} \rangle$ can be written as
\begin{gather}
\begin{split}
\vert G_n^{m_e} \rangle 
&= 
\frac{1}{\sqrt{d^n}} 
\sum_{s \in S (n-k)} \sum_{s' \in S (k)}
\omega^{m_e \, P(s) P(s')} 
\vert s \rangle \vert s' \rangle \\
&= 
\frac{1}{\sqrt{d^n}} 
\sum_{x=0}^{d-1} \sum_{t=0}^{d-1}
\sum_{s \in \zeta_x (n-k)} \sum_{s' \in \zeta_t (k)}
\omega^{m_e \, x t} 
\vert s \rangle \vert s' \rangle \\
&= 
\frac{1}{\sqrt{d^n}} 
\sum_{x=0}^{d-1} \sum_{t=0}^{d-1}
\omega^{m_e \, xt} 
\sum_{s \in \zeta_x (n-k)} 
\sum_{s' \in \zeta_t (k)} 
\vert s \rangle \vert s' \rangle \,,
\end{split}
\end{gather}
and its density matrix as
\begin{equation}
\rho^{(1, \dots, n)} = \displaystyle \frac{1}{d^n} \sum_{x,y,t,u=0}^{d-1} \omega^{m_e \, ( x  t - y  u )} 
\sum_{s \in \zeta_x (n-k)} 
\sum_{s' \in \zeta_t (k)} 
\sum_{s'' \in \zeta_y (n-k)} 
\sum_{s''' \in \zeta_u (k)} 
\vert s \rangle   
\langle s'' \vert 
\otimes
\vert s' \rangle 
\langle s''' \vert \, . 
\end{equation}
By performing the partial trace over the last $k$ qudits, we obtain the reduced density matrix
\begin{gather}
\begin{split}
\rho^{(1, \dots, n-k)} &= \Tr_{n-k,...,n}[\rho^{(1, \dots, n)}] \\ &= \frac{1}{d^n} \sum_{x,y,t=0}^{d-1} \omega^{m_e \, t (x -  y)} |\zeta_t (k)|
\sum_{s \in \zeta_x (n-k)} \sum_{s'' \in \zeta_y (n-k)} 
\vert s \rangle   
\langle s'' \vert 
\, .
\end{split}
\end{gather}
This can be visualized as a block matrix in the form
\begin{gather}
\begin{split}
\rho^{(1, \dots, n-k)} = \frac{1}{d^{n}}
\begin{bmatrix}
    \gamma {(0,0)} \cdot J_{0,0}   &  \gamma {(0,1)} \cdot J_{0,1}  & \dots   &  \gamma {(0,d-1)} \cdot J_{0,d-1} \\
      \gamma {(1,0)} \cdot J_{1,0} &  \gamma {(1,1)} \cdot  J_{1,1} &  \dots   & \gamma {(1,d-1)} \cdot J_{1,d-1}\\
    \vdots 				     &   \vdots  & \ddots   	      & \vdots \\
     \gamma {(d-1,0)} \cdot J_{d-1,0} &  \gamma {(d-1,1)} \cdot J_{d-1,1}  & \dots 		  & \gamma {(d-1,d-1)} \cdot  J_{d-1,d-1}\\
\end{bmatrix} \,,
\end{split}
\end{gather}
where $J_{x,y}$ denotes a matrix of ones with $\vert \zeta_x (n-k) \vert  $ rows and $ \vert \zeta_y (n-k) \vert$ columns, and $\gamma {(x,y)}=\sum_{t=0}^{d-1} \omega^{m_e \, t(x - y)} \vert \zeta_{t} (k) \vert $. All of the elements of $\rho^{(1, \dots, n-k)}$ are real positive (Eq.\,\ref{real_positive}), and its infinity norm is 
\begin{gather}
\begin{split}
\Vert \rho^{(1, \dots, n-k)}\Vert_{\infty} &= 
\frac{1}{d^n} 
\max_{x \in \{0,\dots, d-1\}}
\left\{  \sum_{y=0}^{d-1} \vert \gamma {(x,y)} \vert   
\vert \zeta_{y} (n-k) \vert \right\} \\
&= \frac{1}{d^n} 
\max_{x \in \{0,\dots, d-1\}}
\left\{  \sum_{y=0}^{d-1}  \sum_{t=0}^{d-1} \omega^{m_e \, t(x-y)} \vert \zeta_t (k) \vert 
\vert \zeta_{y} (n-k) \vert \right\} \,.
\end{split}
\end{gather}
In a similar way as done above for $\mathcal N_x$, it can be shown that the expression inside the curly brackets is maximal for $x=0$. Then, using Eq.\,\ref{k_identity}, and recalling that $|\zeta_t(1)|=1$ $\forall t$, we finally obtain
\begin{gather} \label{infinity_norm}
\begin{split}
\Vert \rho^{(1, \dots, n-k)}\Vert_{\infty} &=  \frac{1}{d^n} 
  \sum_{y=0}^{d-1}  \sum_{t=0}^{d-1} \omega^{m_e \, ty} \vert \zeta_t (k) \vert 
\vert \zeta_{y} (n-k) \vert  \\
&=  \frac{1}{d^n} 
  \sum_{y=0}^{d-1}  \sum_{t=0}^{d-1} \omega^{m_e \, ty} \vert \zeta_t (1) \vert 
\vert \zeta_{y} (n-1) \vert  \\
&= \frac{1}{d^n} 
  \sum_{y=0}^{d-1}  \sum_{t=0}^{d-1} \omega^{m_e \, ty}  \vert \zeta_{y} (n-1) \vert  \,.
\end{split}
\end{gather}

\vspace{\baselineskip}

\noindent \textit{Part III: Conclusions.} \\
The maximal squared Schmidt coefficient $\left( s_{max}^{\bar A \bar B}(\vert G_n^{m_e} \rangle) \right)^2$ in Eq.\,(\ref{largest_schmidt}), and the infinity norm $\Vert \rho^{(1, \dots, n-k)}\Vert_{\infty}$ in Eq. (\ref{infinity_norm}), are identical. This implies that the inequality 
\begin{equation}
    \left( s_{max}^{\bar A \bar B}(\vert G_n^{m_e} \rangle) \right)^2 \ge 
    \Vert \rho^{(1, \dots, n-k)}\Vert_{\infty}
\end{equation}
is satisfied for every $k$. Therefore, the maximum squared overlap of the state $\vert G_n^{m_e} \rangle$ with pure biseparable states is
\begin{equation}
   \alpha (\vert G_n^{m_e} \rangle) =  
   \left( s_{max}^{\bar A \bar B}(\vert G_n^{m_e} \rangle) \right)^2 = 
   \frac{1}{d^n} 
  \sum_{y=0}^{d-1}  \sum_{t=0}^{d-1} \omega^{m_e \, ty}  \vert \zeta_{y} (n-1) \vert  \,.
\end{equation}
Using the identity in Eq.\,(\ref{alpha_identity}), and recalling that by definition $E (\vert \psi_n \rangle) =  1 - \alpha (\vert \psi_n \rangle)$, we obtain the equations in the assertion of the theorem.

\end{proof}

Entanglement values for small elementary hypergraph states up to $d=10$ are shown in Fig.\,\ref{fig:ent}. These are calculated using the formula in Eq. (\ref{entanglement_general}). In the figure one can observe the different entanglement values determined by multiplicities.

Moreover, the results obtained in Theor.\,\ref{elementary_entanglement} permit a few interesting observations:
\begin{itemize}
    \item Elementary hypergraph states such that $gcd(m_e,d) = gcd(m_e',d)$ have the same entanglement. This is a consequence of them being equivalent under LU, as shown in  \cite{steinhoff_2019}.
    \item For $d$ prime, entanglement of elementary hypergraph states does not depend on the multiplicity of the hyperedge. The formulas above reduce to the simpler form 
    \begin{equation}
       \alpha(\vert G_n^{m_e} \rangle) = 
       1 - \frac{(d-1)^{n-1}}{d^{n-1}} \,, 
       \quad
       E(\vert G_n^{m_e} \rangle) = 
       \frac{(d-1)^{n-1}}{d^{n-1}} \,.
    \end{equation}
    For $d=2$ these are consistent with the results in Ref.\,\cite{Ghio_2017}.
    \item For given dimension and hyperedge multiplicity, entanglement of elementary hypergraph states is strictly monotonically decreasing in the number of qudits. This can be seen in Fig.\,\ref{fig:ent} and is proved analytically in Appendix C.
    \item For given dimension, among the elementary hypergraph states with the same number of qudits, the state having multiplicity equal to $d / \operatorname{lpf}(d)$, where $\operatorname{lpf}$ stands for least prime factor, is minimally entangled. This can be seen either by directly considering Eq.\,(\ref{entanglement_general}) (the minimum is attained when for all indices but one $k_i = l_i$, and the remaining index is the one corresponding to the smallest $p_i$ in the factorization), or as a consequence of Prop.\,\ref{ent_multi} in the Appendix with the choice of $m^*=1$. The states having multiplicity coprime to the dimension are maximally entangled (in that case $k_i=0 \; \forall i$ in Eq.\,\ref{entanglement_general}). Observing that 1 is coprime to any $n$, this leads to the inequality
    \begin{equation}
    E(|G_n^{d / \text{lpf} (d)} \rangle ) \le
    E(|G_n^{m_e} \rangle ) \le
    E(|G_n^{1} \rangle ) \,.
    \end{equation}

\end{itemize}

\begin{figure}[p]
    \centering
    \includegraphics[width=1.\textwidth]{./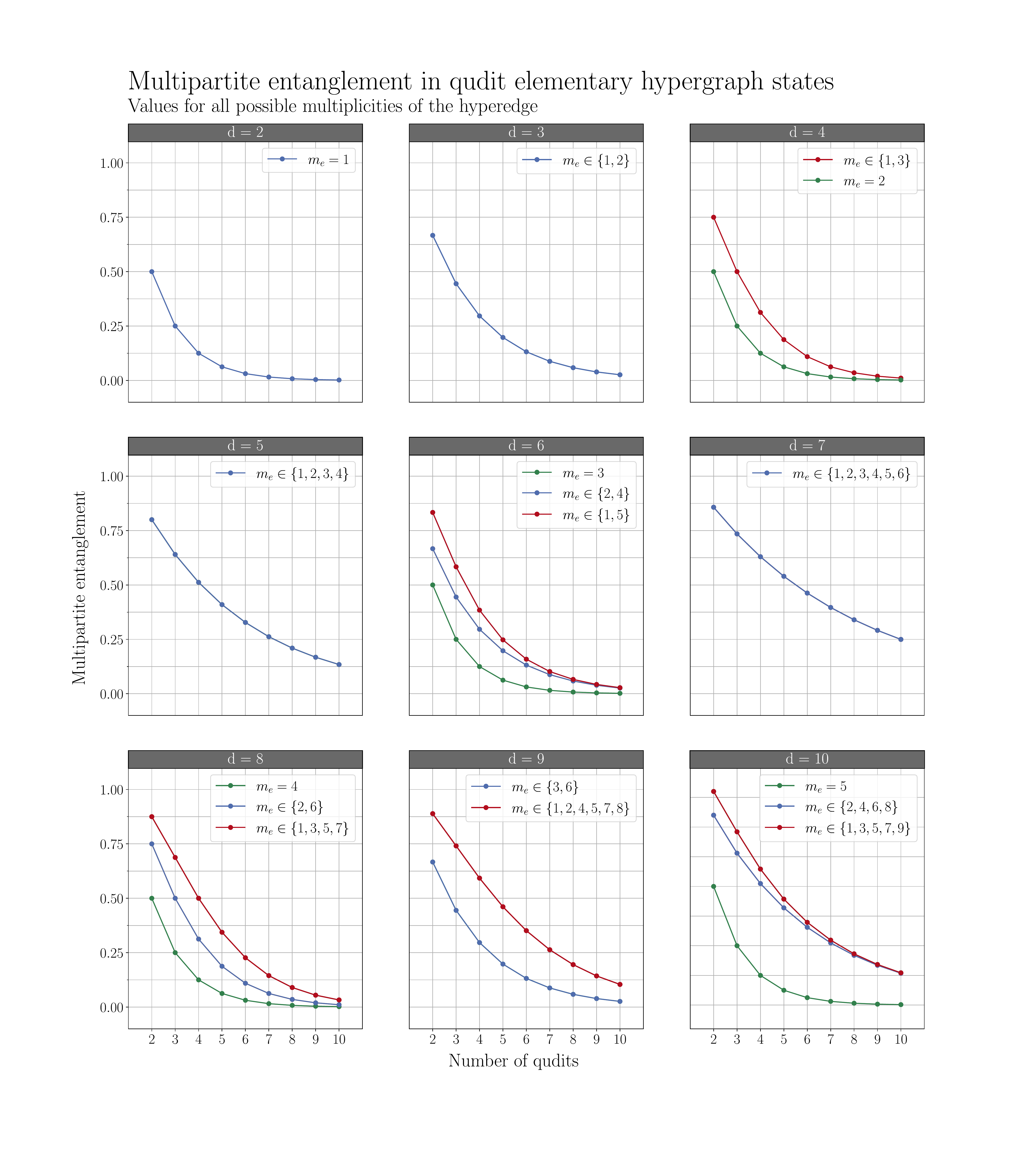}
    \caption{Multipartite entanglement of elementary hypergraph states: values for states up to 10 qudits and up to 10 dimensions (all multiplicities are shown). For $d$ prime, entanglement does not depend on multiplicity. For $d$ non-prime, the curve for minimally entangled states is in green and that for maximally entangled ones in red.}
    \label{fig:ent}
\end{figure}

\section{Lower bound to multipartite entanglement of a generic hypergraph state} \label{sect:lower_bound}

In this section we prove that there exists a lower bound for multipartite entanglement of connected hypergraph states. This is a generalization of Theor.\,III.4 of Ref.\,\cite{Ghio_2017}, where this result was proved for the qubit case. The global structure of our proof is analogous to the one in the reference. There are nevertheless some relevant differences, determined by the generalization to arbitrary dimension. We discuss these more in detail in the following section.

\subsection{Preliminary observations}

As already pointed out, the Pauli operators $X$ and $Z$ are self-adjoint for $d=2$, but not for arbitrary $d$. For this reason, by measurements in the $Z$ basis, in the following we will always mean measurements of a non-degenerate observable $M_Z=\sum_q \mu_q | q \rangle \langle q |$, with $\mu_q$ real. Moreover, we will always apply $X^\dagger$ gates (and not $X$) to hypergraph states.

The removal of hyperedges by applying unitary gates, a procedure systematically used in the reference, requires special care for arbitrary dimension. For $d=2$, it is always possible to remove a largest-minus-one cardinality hyperedge contained into a largest cardinality one by means of an $X^{(\dagger)}$ gate (in that case $X^{\dagger}= X$). For arbitrary $d$, such operation is only possible if certain relations among the dimension and the multiplicities of the considered hyperedges  hold.
Let us consider for example the hypergraph state 
\begin{equation*}
    | H_4 \rangle = Z_{\{1,2,3,4\}}^2 Z_{\{2,3,4\}}^3 |+\rangle^4 \,,
\end{equation*}
for arbitrary $d>3$. The state is depicted in Fig.\,\ref{fig:examples}\,(c). The application of an $X^\dagger$ gate on the first qudit for $q$ times, see Eq.\,\ref{x_effect}, leads to
\begin{equation*}
    (X_1^\dagger)^q | H_4 \rangle = Z_{\{1,2,3,4\}}^2 Z_{\{2,3,4\}}^{3+2q} |+\rangle^4 \,.
\end{equation*}
If, for example, $d=5$, the hyperedge $\{2,3,4\}$ can be removed by applying the gate once ($q=1$). But, if $d=6$, there is no $q$ such that $3+2q \equiv 0 \mod 6$, and the hyperedge cannot be removed.

The following proposition analyzes more in general the case of hypergraph states endowed with a single largest cardinality hyperedge containing at least one largest-minus-one cardinality hyperedge. It inspects the possibility of removing hyperedges by means of $X^\dagger$ gates and the effect of measurements in the $Z$ basis.

\begin{prop} \label{prop_non_separable}
Let $| H_n \rangle$ be an $n$-qudit hypergraph state in the form 
\begin{equation} \label{remove_condition}
    | H_n \rangle = Z_{e}^{m_e} Z_{e \setminus \{k\}}^{m_{e \setminus \{k\}} } | H_n' \rangle \,,
\end{equation}
where $e$ is a hyperedge of cardinality $|e|=n$, $k$ is a qudit, and $| H_n' \rangle$ is either a hypergraph state featuring only hyperedges of cardinality smaller than $n$ and different from $e \setminus \{k\}$ or the empty one. Let also $AB$ be a bipartition crossing both $e$ and $e \setminus \{k\}$. For $d$ prime, it is always possible to remove the hyperedge $e\setminus \{k\}$ by means of an $X^{\dagger}_k$ gate. For $d$ not prime, whenever this operation is not possible, a measurement in the $Z$ basis on the qudit $k$ never results in a separable state with respect to the bipartition $AB$.
\end{prop}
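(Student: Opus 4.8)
The statement has two parts. The first part — that for $d$ prime the hyperedge $e\setminus\{k\}$ can always be removed by an $X_k^\dagger$ gate — follows directly from Eq.\,\ref{x_effect}: applying $(X_k^\dagger)^q$ to $|H_n\rangle$ sends the multiplicity of $e\setminus\{k\}$ from $m_{e\setminus\{k\}}$ to $m_{e\setminus\{k\}} + q\,m_e \bmod d$ (while creating no other new hyperedges of cardinality $\geq n-1$, since the only hyperedge of full cardinality is $e$ and $|H_n'\rangle$ contains no copy of $e\setminus\{k\}$). For $d$ prime, $\gcd(m_e,d)=1$ because $m_e \in \mathbb{Z}_d \setminus \{0\}$ (if $m_e=0$ the hyperedge $e$ would not be present), so $q \mapsto q\,m_e \bmod d$ is a bijection of $\mathbb{Z}_d$ and there exists $q$ with $m_{e\setminus\{k\}} + q\,m_e \equiv 0$. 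Hence the hyperedge is removed. For general $d$ the same computation shows removal is possible iff $\gcd(m_e,d) \mid m_{e\setminus\{k\}}$; I would state this explicitly as the condition under which ``this operation is not possible'' fails.

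The substantive part is the claim about measurements in the $Z$ basis when removal is \emph{not} possible, i.e.\ when $g := \gcd(m_e,d)$ does not divide $m_{e\setminus\{k\}}$. First I would use Eq.\,\ref{x_effect} to reduce $|H_n\rangle$, by a suitable $(X_k^\dagger)^q$, to the state $Z_e^{m_e} Z_{e\setminus\{k\}}^{r}\,|H_n'\rangle$ where $r$ is any convenient representative of $m_{e\setminus\{k\}}$ modulo $g$; in particular I can assume $0 < r < g \le d$, so $r \not\equiv 0$. Since $X_k^\dagger$ is a local unitary on qudit $k$ and the bipartition $AB$ has $k$ on one side, this does not change separability with respect to $AB$. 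Now apply Eq.\,\ref{z_effect}: a $Z$-basis measurement on qudit $k$ with outcome labelled by $a \in \mathbb{Z}_d$ projects onto
\begin{equation*}
    |a\rangle_k \otimes\, Z_{e\setminus\{k\}}^{(m_e + r)\,a}\,\prod_{f \in E'} Z_{f\setminus\{k\}}^{m_f\, a}\,|+\rangle^{V\setminus\{k\}} \,,
\end{equation*}
where $E'$ is the hyperedge multiset of $|H_n'\rangle$ (each $f \in E'$ has $|f| < n$ and $f \neq e\setminus\{k\}$). The resulting state on $V\setminus\{k\}$ is itself a hypergraph state on $n-1$ qudits. The key point: the hyperedge $e\setminus\{k\}$, which now has $n-1$ qudits, i.e.\ is the full vertex set of the post-measurement system, appears with total multiplicity $(m_e+r)\,a$ coming from the first factor, and possibly receives further contributions $m_f\,a$ only from those $f \in E'$ with $f\setminus\{k\} = e\setminus\{k\}$ — but $f\setminus\{k\} = e\setminus\{k\}$ with $|f|<n$ forces $f = e\setminus\{k\}$, which is excluded. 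So the multiplicity of the full-support hyperedge $e\setminus\{k\}$ in the projected state is exactly $(m_e+r)\,a \bmod d$.

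Now I argue nonseparability across $AB$ (restricted to $V\setminus\{k\}$, which still crosses $e\setminus\{k\}$ by hypothesis). For an elementary-type argument: the post-measurement state is LU-equivalent to a state $Z_{e\setminus\{k\}}^{M}\,|H''\rangle$ where $M = (m_e+r)a$ and $|H''\rangle$ has only strictly-smaller hyperedges, and one can peel off the smaller hyperedges — those contained in proper subsets of $V\setminus\{k\}$ do not obstruct the argument because, as in Part\,II of the previous proof, they contribute only block-diagonal structure; the relevant reduced state across $AB$ is controlled by $|\zeta_\bullet|$ cardinalities with argument the multiplicity $M$ of the top hyperedge. Concretely, the state is separable across $AB$ iff that top multiplicity is $\equiv 0 \bmod d$ (by Theorem\,\ref{elementary_entanglement}, $\alpha < 1$ precisely when $g_M := \gcd(M,d) < d$, with strict entanglement whenever the top hyperedge survives). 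So I must show that for \emph{every} measurement outcome $a$, $(m_e + r)\,a \not\equiv 0 \bmod d$ is \emph{not} what we need — rather, we only need that there is no outcome forcing the top edge to vanish while the remaining graph is disconnected across $AB$; but more cleanly: the projected state contains the surviving hyperedge $e\setminus\{k\}$ with multiplicity $(m_e+r)a$, and I claim this is $\not\equiv 0$ for the outcomes that actually occur. The obstacle is the outcome $a$ with $(m_e+r)a \equiv 0 \bmod d$: I must show that for such $a$ the \emph{other} hyperedges still prevent separability, or that such $a$ cannot arise, or — most likely the intended route — that $\gcd(m_e+r,\,d)$ still fails to divide what is needed so that no single $a$ can simultaneously kill \emph{all} cross-$AB$ hyperedges. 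I expect the main work, and the main obstacle, to be precisely this bookkeeping: showing that because $g \nmid r$ we have $m_e + r \not\equiv 0 \bmod g$, hence $\gcd(m_e+r,d)$ is a proper divisor of $d$ in a way that, combined with the structure of $|H_n'\rangle$, rules out separability for every outcome $a$. I would handle this by the same eigenvalue/infinity-norm machinery as in Theorem\,\ref{elementary_entanglement}, applied to the reduced state of the post-measurement hypergraph state across $AB$, reducing the claim to the arithmetic fact that $g\nmid r$ prevents the top-cardinality hyperedge's multiplicity from vanishing modulo $d$ for the relevant outcomes.
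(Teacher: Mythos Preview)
Your treatment of the prime case is fine and matches the paper. The non-prime case, however, contains a computational error that creates all of the difficulty you encounter and that the paper's proof does not have.

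The mistake is in how you apply Eq.\,\ref{z_effect} to the hyperedge $e\setminus\{k\}$. That hyperedge does \emph{not} contain $k$; consequently a $Z$-basis measurement on qudit $k$ leaves its multiplicity unchanged. Only hyperedges containing $k$ have their multiplicity multiplied by the measurement outcome. So, measuring $|H_n\rangle$ with outcome $q$, the contribution of $Z_e^{m_e}$ becomes $Z_{e\setminus\{k\}}^{m_e q}$, while $Z_{e\setminus\{k\}}^{m_{e\setminus\{k\}}}$ stays as it is. The full-support hyperedge on the remaining $n-1$ qudits therefore has multiplicity
\[
m_{e\setminus\{k\}} + q\,m_e \pmod d,
\]
not $(m_e + r)\,a$ as you wrote. (You evidently applied the rule ``multiply by the outcome'' uniformly to all hyperedges, which is exactly what goes wrong when Eq.\,\ref{z_effect} is read too literally.)

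With the correct expression the proof is immediate: the post-measurement top hyperedge vanishes for some outcome $q$ if and only if $m_{e\setminus\{k\}} + q\,m_e \equiv 0 \bmod d$ has a solution, which is precisely the solvability condition for removal by $X_k^\dagger$. By hypothesis no such $q$ exists, hence for every outcome the resulting $(n-1)$-qudit hypergraph state carries a nonzero full-cardinality hyperedge and is therefore entangled across the restriction of $AB$. This is exactly the paper's argument; there is no need for the preliminary $X_k^\dagger$ reduction, for the $a=0$ worry, or for invoking the infinity-norm machinery of Theorem~\ref{elementary_entanglement}. All of those complications stem from the incorrect multiplicity formula.
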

\begin{proof}
The application of the Pauli gate $X^\dagger$ on qudit $k$ for $q$ times gives
\begin{equation}
    (X_k^\dagger)^q | H_n \rangle = Z_{e}^{m_e} 
    Z_{e \setminus \{k\}}^{m_{e \setminus \{k\}} + q m_e } (X^\dagger_k)^q | H_n' \rangle \,,
\end{equation}
where $(X^\dagger_k)^q  | H_n' \rangle$ is again a hypergraph state featuring only hyperedges of cardinality smaller than $n$ and different from $e \setminus \{k\}$ or the empty one.
In order to remove the hyperedge $e\setminus \{k\}$, there has to exist a $q$ satisfying the condition
\begin{equation*}
    m_{e \setminus \{k\}} + q m_e \equiv 0 \mod d\,.
\end{equation*}
Such a $q$ does not always exist \cite{long1965}. It does, for example, if $d$ is prime\footnote{In that case, as a consequence of Lemma \ref{lem_num_classes}, the $d$ possible products obtained multiplying $m_e$ by a $q \in \mathbb Z_d$, belong to $d$ different congruence classes $\bmod \, d$.}. 

Let us consider the case where there is no $q$ satisfying the previous condition. A measurement in the $Z$ basis on qudit $k$, see Eq.\,\ref{z_effect},  has one of the possible outcomes
\begin{equation} \label{after_z}
    | q \rangle Z_{e \setminus \{k\}}^{m_{e \setminus \{k\}}  + q m_e} | H_{n-1}  \rangle \,,
\end{equation}
where $| H_{n-1} \rangle$ is either a hypergraph state featuring only hyperedges of cardinality smaller than $n-1$ or the empty one. Eq.\,\ref{after_z} can result in a separable state with respect to the bipartition $AB$ only if there exists a $q$ such that 
\begin{equation*}
    m_{e \setminus \{k\}} + q m_e \equiv 0  \mod d \,.
\end{equation*}
For our assumptions on $q$, this is never possible. 
\end{proof}

The fact that, for the qubit case, it is always possible to remove a largest-minus-one cardinality hyperedge contained into a largest-cardinality one by means of an $X^{(\dagger)}$ gate, emerges as a consequence of $d=2$ being a prime number.

\subsection{Lower-bound theorem} 

In the light of our previous observations, we develop an iterative a procedure which permits to transform any connected hypergraph state into a state constituted by an elementary hypergraph state and possible additional factorized qudits. A use-case of the procedure is shown in Fig.\,\ref{fig:red}.

\begin{prop} \label{prop_reduction}
Let $| H_{n,k_{max}} \rangle$ be an $n$-qudit connected hypergraph state of largest hyperedge cardinality $k_{max}$. It is always possible to transform $| H_{n,k_{max}} \rangle$ into a state in the form 
\begin{equation}
    | G_{\kappa}^{\mu} \rangle | q_1 \rangle | q_2 \rangle \dots | q_{n - \kappa } \rangle \,,
\end{equation}
by only means of single-qudit measurements and operations that are local with respect to a chosen bipartition $AB$. $| G_\kappa^{\mu} \rangle$ is an elementary hypergraph state with $2 \le \kappa \le k_{max}$ qudits and hyperedge multiplicity $\mu \in \mathbb Z_d$, and it is still crossed by the bipartition $AB$.
\end{prop}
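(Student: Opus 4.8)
The plan is to describe an explicit iterative algorithm that peels off hyperedges one cardinality-level at a time, from the top down, using only $X^\dagger$ gates local to one side of the bipartition and $Z$-basis measurements, and to argue that each step preserves connectivity of the "core" around the chosen bipartition $AB$ while strictly reducing either the number of distinct largest hyperedges or the largest cardinality itself. First I would fix a bipartition $AB$ and, among all largest-cardinality hyperedges crossing $AB$, single out one, call it $e$, of cardinality $k\le k_{max}$. The goal of one outer iteration is to eliminate every hyperedge of cardinality $\ge k$ except a single one crossing $AB$, and then to eliminate the factorized or disconnected qudits so that what remains is genuinely elementary. The key observation driving the induction is Eq.\,\ref{x_effect}: applying $X_j^\dagger$ on a qudit $j\in e$ creates a copy $e\setminus\{j\}$ of one-lower cardinality with the same multiplicity, and more generally lets us shift multiplicities of the $(|e|-1)$-subhyperedges. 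Dually, a $Z$-basis measurement on a qudit $j$ (Eq.\,\ref{z_effect}) deletes $j$, converts each hyperedge through $j$ into its restriction with a measurement-dependent multiplicity, and crucially — by Prop.\,\ref{prop_non_separable} — when the algebraic obstruction $m_{e\setminus\{j\}}+qm_e\equiv 0$ has no solution, the post-measurement state is \emph{still crossed by} $AB$ in a non-separable way. So each iteration has two modes: if the relevant congruence is solvable we remove the unwanted sub-hyperedge by an $X^\dagger$ gate on the appropriate side of $AB$; if not, we perform a $Z$-measurement on that qudit, which lowers $n$ by one but is guaranteed to keep $AB$ crossing the remaining hyperedges.

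Concretely, the iteration I would run is: (i) pick a largest-cardinality hyperedge $e$ crossing $AB$; (ii) for every \emph{other} hyperedge $f$ with $|f|\ge |e|$ (whether or not it crosses $AB$), use $X^\dagger$ gates on vertices of $f$ — choosing, whenever $f$ crosses $AB$, a vertex on the same side as a vertex of $e$, so the gate is local w.r.t.\ $AB$ — to reduce $f$ down through successive cardinalities; the generated lower-cardinality copies are absorbed into the bookkeeping. Multiplicities that cannot be zeroed by gates are cleared by a $Z$-measurement on a vertex \emph{not} separating $e$ from $AB$, invoking Prop.\,\ref{prop_non_separable} to certify $AB$ stays non-trivially crossed. (iii) Once $e$ is the unique top-cardinality hyperedge, strip all strictly-smaller hyperedges below it the same way, again using $X^\dagger$ on the $AB$-preserving side or $Z$-measurements where the congruence fails, until the only hyperedge left supported on $e$'s vertex set is $e$ itself. (iv) The qudits outside $e$ that have become disconnected (or were connected only through deleted hyperedges) are in product form — either already $|+\rangle$ states, hence trivially factorized, or fixed by the measurements to computational-basis states $|q_j\rangle$; either way a final single-qudit unitary (local, hence allowed) rotates them to $|q_j\rangle$. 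What is left is $|G_\kappa^\mu\rangle$ on $\kappa=|e|$ qudits with $2\le\kappa\le k_{max}$, still crossed by $AB$, tensored with $|q_1\rangle\cdots|q_{n-\kappa}\rangle$.

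Two points need care. The first is \textbf{termination and the connectivity invariant}: I would set up a lexicographic potential $(k_{max}, \#\{\text{largest hyperedges}\}, \text{total hyperedge multiplicity count}, n)$ and check that every elementary move — an $X^\dagger$ gate or a $Z$-measurement — strictly decreases it, while the invariant "$AB$ crosses at least one remaining hyperedge" is maintained; the gate case is bookkeeping on Eq.\,\ref{x_effect}, and the measurement case is exactly Prop.\,\ref{prop_non_separable}. One must verify that the $X^\dagger$-generated lower-cardinality hyperedges never resurrect a hyperedge of cardinality $\ge k$, which is immediate since $X_j^\dagger$ only produces subsets of already-present edges. The second, and I expect the \textbf{main obstacle}, is ensuring that the locality constraint with respect to $AB$ is never violated: when we must clear a hyperedge $f$ that crosses $AB$, we need a vertex of $f$ on the $A$-side (say) and we need the whole cleaning of $f$ to be doable with $A$-local gates or with $Z$-measurements on vertices whose removal does not un-cross $AB$ from $e$. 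Since $e$ crosses $AB$, $e$ has vertices on both sides; choosing the gate-vertex for $f$ on, say, the $A$-side, and never measuring the $A$-side or $B$-side vertices of $e$ themselves, keeps $e$ (or its surviving sub-hyperedge guaranteed by Prop.\,\ref{prop_non_separable}) straddling the cut throughout. Making this vertex-selection argument airtight — in particular, handling the edge case where $f$ and $e$ share vertices only on one side, or where clearing $f$ would force a measurement on a cut vertex of $e$ — is the delicate combinatorial core, and I would treat it by always preferring, in step (ii), the sub-hyperedge-shifting route over deletion, and only deleting a vertex when no $X^\dagger$-shift on an $AB$-safe vertex resolves the multiplicity, at which point Prop.\,\ref{prop_non_separable} applies verbatim.
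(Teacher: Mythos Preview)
Your overall architecture—iterate while keeping one hyperedge straddling $AB$, with Prop.~\ref{prop_non_separable} as the safety net—matches the paper, but two of your tools are misapplied. First, Eq.~\ref{x_effect} says $X_k^\dagger$ \emph{adds} a copy $e\setminus\{k\}$ for every hyperedge $e$ through $k$; it never removes an existing hyperedge nor lowers its cardinality. So your step~(ii), ``use $X^\dagger$ gates on vertices of $f$ \dots\ to reduce $f$ down through successive cardinalities'', cannot work: if $f$ is a top-cardinality hyperedge there is nothing above it to spill down from, and its multiplicity is untouched by any $X^\dagger$. (The locality worry you raise for $X^\dagger$ is also moot: single-qudit gates are local with respect to any cut.) The paper handles this step differently and much more simply: it measures in the $Z$ basis every qudit \emph{outside} the chosen hyperedge $e$, which by Eq.~\ref{z_effect} leaves $e$ intact and replaces every other hyperedge by a sub-hyperedge of $e$, so one lands on $\kappa=|e|$ qudits with a unique top edge in one stroke.

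Second, you omit a tool the paper actually needs, and you over-extend Prop.~\ref{prop_non_separable}. Once the unique top edge $e$ is isolated, sub-hyperedges $f\subsetneq e$ of cardinality $<\kappa-1$ cannot be cleared by $X^\dagger$ (which only shifts multiplicities one level below $e$), and Prop.~\ref{prop_non_separable} applies only to cardinality-$(\kappa{-}1)$ sub-hyperedges, so invoking it in your step~(iii) for arbitrary smaller $f$ is out of scope. The paper's remedy is to remove every such $f$ that does \emph{not} cross $AB$ by applying the controlled phase gate $Z_f^{d-m_f}$; this is a multi-qudit operation but is local with respect to $AB$ since $f$ lies entirely on one side. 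The remaining crossing sub-hyperedges are then handled by a $Z$-measurement on a vertex of $e$ outside them and a restart at a smaller $\kappa$. Without the controlled-phase step your step~(iii) does not terminate in an elementary state, and consequently your step~(iv) claim that the outside qudits are automatically factorized is unjustified.
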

\begin{proof}
We provide a step-by-step iterative procedure which permits to realize the desired transformation:
\begin{itemize}

    \item Among the hyperedges crossing the bipartition $AB$, we select one (or the one) with the largest cardinality. We denote its cardinality as $\kappa$: by construction $\kappa \le k_{max}$.
    
    \item If $\kappa = k_{max}=n$, we move directly to the next point, otherwise we perform measurements in the $Z$ basis on the qudits outside of the selected hyperedge. These measurements do not delete the selected hyperedge, but can modify the structure of hyperedges inside of it. The measurements shall be repeated until the state is in the form
    \begin{equation}
        | H_\kappa \rangle | q_1 \rangle | q_2 \rangle \dots | q_{n - \kappa } \rangle\,,
    \end{equation}
    where $| H_\kappa \rangle$ is a hypergraph state endowed with a single hyperedge of largest cardinality $\kappa$ and possibly other lower cardinality internal hyperedges, and $\{ | q_i \rangle \}$ are single-qudit states whose state depends on the outcomes of the measurements. If $| H_\kappa \rangle$ is an elementary hypergraph state, then we stop here, otherwise we move to the next point.
    
    \item Three scenarios are now possible:
    
    \begin{itemize}
        \item There are no hyperedges of cardinality $\kappa-1$ crossing the bipartition. In this case, we move directly to the next point.
        \item There are one or more hyperedges of cardinality $\kappa-1$ crossing the bipartition, and they can all be removed by means of $X^\dagger$ gates. In this case, we remove them and then move to the next point.
        \item There are one or more hyperedges of cardinality $\kappa-1$ crossing the bipartition which cannot be removed by means of $X^\dagger$ gates. In this case, we measure in the $Z$ basis on a qudit outside of a hyperedge that cannot be removed, but within the highest cardinality hyperedge. For Prop.\,\ref{prop_non_separable}, the outcome of the measurement will never result in a state which is separable with respect to the chosen bipartition. We restart then the procedure from the beginning, using the hypergraph state resulting from the measurement as an input state. Please notice that, in the new round, $\kappa$ will be redefined to a value lower by one.
    \end{itemize}
    
    \item At this point, there are no more hyperedges of cardinality $\kappa-1$, but there can be hyperedges of smaller cardinality. If there are any not crossing the bipartition, we remove them by means of controlled phase gates. Despite not being single-qudit, these operations are local with respect to the bipartition. This guarantees that entanglement is non-increasing under them. For $d>2$, it can be necessary to apply a same controlled phase gate multiple times, according to the multiplicities of the hyperedges to be removed.
    
    \item The state is currently in the form 
    \begin{equation}
        | {H'}_\kappa \rangle | q_1 \rangle | q_2 \rangle \dots | q_{n - \kappa } \rangle\,,
    \end{equation}
    where $| H_{\kappa}' \rangle$ is still a hypergraph state endowed with a single hyperedge of largest cardinality $\kappa$ and possibly other lower cardinality internal hyperedges.
    If $| {H'}_\kappa \rangle$ is an elementary hypergraph state, we stop here. If this is not the case, then there are one or more internal hyperedges of cardinality smaller than $\kappa -1$ crossing the bipartition. Among these, we select one (or the one) of largest cardinality. We measure in the $Z$ basis on a qudit outside of it but within the hyperedge of cardinality $\kappa$. By construction, the state resulting from the measurement possesses at least one hyperedge crossing the bipartition. This state might be a non-connected one. In such case, among the hyperedges crossing the bipartition, we select one of largest cardinality, and we remove all the hyperedges not connected to it by means of (possibly repeated) measurements in the $Z$ basis. 
    We restart then the procedure from the beginning. Please notice that, in the new round, $\kappa$ will be redefined to a lower value.
    
\end{itemize}

The procedure above yields a state in the form 
\begin{equation}
    | {G}_\kappa^\mu \rangle | q_1 \rangle | q_2 \rangle \dots | q_{n - \kappa } \rangle\,,
\end{equation}
where $| {G}_\kappa^\mu \rangle$ is an elementary hypergraph state crossed by the bipartition $AB$, with $2 \le \kappa \le k_{max}$ and $\mu \in \mathbb{Z}_d$.
    
\end{proof}

\begin{figure}[p]
    \centering
    \includegraphics[width=0.9\textwidth]{./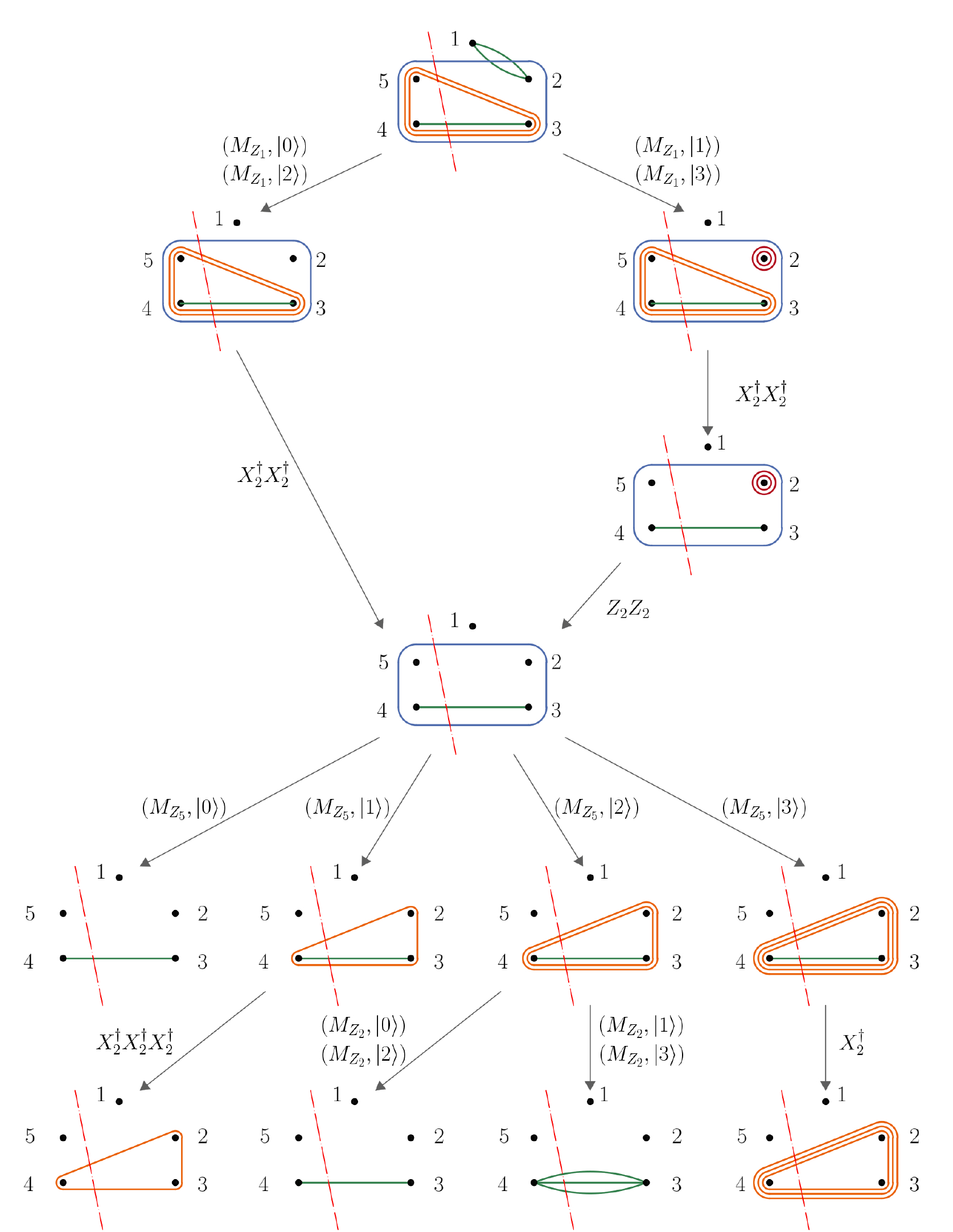}
    \caption{Example of transformation of a generic hypergraph state into an elementary one, following the procedure in Prop.\,\ref{prop_reduction}. Here the dimension is $d=4$ and the input state $Z_{\{2,3,4,5\}}^{} Z_{\{3,4,5\}}^2 Z_{\{1,2\}}^2 Z_{\{3,4\}^{ } } |+\rangle^5$, with the choice of the bipartition $A=\{1,2,3\}$, $B=\{4,5\}$. In the scheme, the notation $(M_{Z_i}, | j \rangle)$ stands for a measurement in the $Z$ basis on qudit $i$ that projects the qudit onto the state $| j \rangle$, whereas $X^\dagger_k$ $(Z_k)$ stands for the application of an $X^\dagger$ $(Z)$ gate to qudit $k$.}
    \label{fig:red}
\end{figure}

Using the previous procedure, we prove the existence of a lower bound for multipartite entanglement of connected hypergraph states.

\begin{thm}[Multipartite entanglement - lower bound] \label{lower_general}
Let $| H_{n,k_{max}} \rangle$ be an $n$-qudit connected hypergraph state of maximum hyperedge cardinality equal to $k_{max}$. Then its overlap with pure biseparable states is upper bounded as
\begin{equation} \label{alphakmax}
    \alpha(| H_{n,k_{max}} \rangle) \le  \alpha(\vert G_{k_{max}}^{d / \text{\normalfont lpf} (d)} \rangle )  \,,  
\end{equation}
and its multipartite entanglement is lower bounded as
\begin{equation} \label{ekmax}
    E(| H_{n,k_{max}} \rangle) \ge  E(\vert G_{k_{max}}^{d / \text{\normalfont lpf} (d)} \rangle )  \,,  
\end{equation}
where $\vert G_{k_{max}}^{d / \text{\normalfont lpf} (d)} \rangle$ is an elementary hypergraph state of cardinality $k_{max}$ and multiplicity $d / \text{\normalfont lpf} (d)$, with $\text{\normalfont lpf} (d)$ the least prime factor of $d$.
\begin{proof}

    Let us consider a bipartition $AB$. Using the procedure detailed in Prop.\,\ref{prop_reduction}, we reduce the state $| H_{n,k_{max}} \rangle$ to an elementary hypergraph state $| G_{\kappa}^{\mu} \rangle$ which is still crossed by the bipartition $AB$ and such that $2 \le \kappa \le k_{max}$ and $\mu \in \mathbb{Z}_d$. Such a reduction is realized by only means of single-qudit measurements and operations that are local with respect to the chosen bipartition $AB$, under which entanglement is non-increasing. Therefore, the initial state is equally or more entangled (with respect to the bipartition $AB$) than the outcome state of the procedure:
    \begin{equation}
        E^{AB} (| H_{n,k_{max}} \rangle) \ge E^{AB} (| G_{\kappa}^{\mu} \rangle)\,.
    \end{equation}
    In particular, the initial state is equally or more entangled than the least-entangled possible outcome state:
    \begin{equation}
        E^{AB} (| H_{n,k_{max}} \rangle) \ge \min_{\mu \in \mathbb{Z}_d, \, 2\le \kappa \le k_{max}} E^{AB} (| G_{\kappa}^{\mu} \rangle) \,.
    \end{equation}
    Using the definition of multipartite entanglement, and noticing that $E (| G_{\kappa}^{\mu} \rangle)$ is minimized by $\mu = d / \text{lpf} (d)$ and $\kappa=k_{max}$ (see the observations after Theor.\,\ref{elementary_entanglement}), we have
    \begin{equation}
    E^{AB} (| H_{n,k_{max}} \rangle) \ge 
    \min_{\mu \in \mathbb{Z}_d, \, 2\le \kappa \le k_{max}} E (| G_{\kappa}^{\mu} \rangle) =
    E (| G_{k_{max}}^{d / \text{lpf} (d)} \rangle ) \,.
    \end{equation}
    Since the inequality holds for any choice of the initial bipartition $AB$, we can conclude that 
    \begin{equation}
    E (| H_{n,k_{max}} \rangle) \ge
    E (| G_{k_{max}}^{d / \text{lpf} (d)} \rangle ) \,.
    \end{equation}
    
    The inequality for the maximum overlap with pure biseparable states follows naturally from the previous one by recalling that $E (| \psi_n \rangle) = 1 - \alpha (| \psi_n \rangle)$.
\end{proof}
\end{thm}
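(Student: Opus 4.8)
The plan is to fix an arbitrary bipartition $AB$ of the $n$ qudits, reduce $|H_{n,k_{max}}\rangle$ to an elementary hypergraph state by operations that do not couple $A$ with $B$, and then invoke monotonicity of bipartite entanglement under local operations. Concretely, I would apply Prop.\,\ref{prop_reduction} to $|H_{n,k_{max}}\rangle$ with this choice of $AB$, obtaining a state $|G_\kappa^\mu\rangle\,|q_1\rangle\cdots|q_{n-\kappa}\rangle$ with $2\le\kappa\le k_{max}$ and $\mu\in\mathbb Z_d$, produced from $|H_{n,k_{max}}\rangle$ only by single-qudit measurements and gates that are local with respect to $AB$. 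Since bipartite entanglement is invariant under local unitaries and non-increasing under LOCC, this yields $E^{AB}(|H_{n,k_{max}}\rangle)\ge E^{AB}(|G_\kappa^\mu\rangle\,|q_1\rangle\cdots|q_{n-\kappa}\rangle)=E^{AB'}(|G_\kappa^\mu\rangle)$, where $AB'$ is the bipartition that $AB$ induces on the $\kappa$ qudits carrying the elementary state (the factorized single-qudit states contribute nothing to the entanglement). Crucially, Prop.\,\ref{prop_reduction} guarantees that $|G_\kappa^\mu\rangle$ is still \emph{crossed} by the bipartition, so its unique hyperedge is genuinely present and $\mu\not\equiv 0\bmod d$, and $AB'$ is a genuine (both-sides-nonempty) bipartition.

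Next I would remove the dependence on $AB'$, $\kappa$, and $\mu$. By the definition of multipartite entanglement, $E^{AB'}(|G_\kappa^\mu\rangle)\ge E(|G_\kappa^\mu\rangle)$ for any $AB'$. The observations following Theor.\,\ref{elementary_entanglement} then supply exactly what is needed: for fixed $d$, $E(|G_\kappa^\mu\rangle)$ is strictly monotonically decreasing in the number of qudits $\kappa$, and among elementary states on a fixed number of qudits with nonzero multiplicity it is minimized by $\mu=d/\operatorname{lpf}(d)$. Hence $E(|G_\kappa^\mu\rangle)\ge \min_{2\le\kappa\le k_{max}}\min_{\mu\not\equiv 0}E(|G_\kappa^\mu\rangle)=E(|G_{k_{max}}^{d/\operatorname{lpf}(d)}\rangle)$, the double minimum being attained at $\kappa=k_{max}$, $\mu=d/\operatorname{lpf}(d)$. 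Combining the two chains gives $E^{AB}(|H_{n,k_{max}}\rangle)\ge E(|G_{k_{max}}^{d/\operatorname{lpf}(d)}\rangle)$ for \emph{every} bipartition $AB$; taking the minimum over $AB$ and using Eq.\,\ref{def_multipartite} yields Eq.\,\ref{ekmax}, and Eq.\,\ref{alphakmax} follows at once from $E(|\psi_n\rangle)=1-\alpha(|\psi_n\rangle)$.

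The main obstacle is contained entirely in the reduction step, i.e.\ in Prop.\,\ref{prop_reduction}, which itself rests on Prop.\,\ref{prop_non_separable}. Three points require care. First, one must verify that the auxiliary $Z$-basis measurements used to peel off qudits lying outside the chosen largest crossing hyperedge never delete that hyperedge; Eq.\,\ref{z_effect} shows such a measurement only reshuffles and lowers the cardinality of hyperedges strictly contained in it. Second, one must argue that the iterative procedure terminates: whenever the procedure restarts, $\kappa$ is redefined to a strictly smaller value, and it is bounded below by $2$. Third — and this is the genuinely $d>2$ difficulty, absent in Ref.\,\cite{Ghio_2017} — when a crossing hyperedge of cardinality $\kappa-1$ cannot be eliminated by $X^\dagger$ gates (precisely the case in which $m_{e\setminus\{k\}}+q\,m_e\equiv 0\bmod d$ has no solution $q$), one cannot fall back on local-unitary equivalence; one measures in the $Z$ basis instead, and Prop.\,\ref{prop_non_separable} is exactly the guarantee that no outcome of that measurement is biseparable with respect to $AB$, so that the post-measurement state still carries entanglement across the cut. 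A final subtlety, used implicitly above, is that ``crossed by $AB$'' must be read as forcing $\mu\not\equiv 0\bmod d$; otherwise $|G_\kappa^\mu\rangle=|+\rangle^{\otimes\kappa}$ and the asserted bound would be vacuous.
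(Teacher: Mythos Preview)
Your proof is correct and follows essentially the same route as the paper's: fix a bipartition, invoke Prop.\,\ref{prop_reduction} to reduce to an elementary state crossed by that bipartition, use LOCC monotonicity to compare bipartite entanglements, minimize over $\kappa$ and $\mu$ via the observations after Theor.\,\ref{elementary_entanglement}, and then minimize over bipartitions. Your write-up is, if anything, slightly more careful than the paper's in making explicit that the factorized single-qudit states contribute nothing, that $\mu\not\equiv 0\bmod d$, and that the iteration terminates.
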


One may observe that for $d$ prime Eqs.\,\ref{alphakmax}-\ref{ekmax} reduce to the simpler forms
\begin{equation}
   \alpha(| H_{n,k_{max}} \rangle) \le 
   1 - \frac{(d-1)^{k_{max}-1}}{d^{k_{max}-1}} \,, 
   \quad
   E(| H_{n,k_{max}} \rangle) \ge 
   \frac{(d-1)^{k_{max}-1}}{d^{k_{max}-1}} \,.
\end{equation}

\subsection{A tighter lower bound}
For a special class of hypergraph states, the lower bound given above in Theor.\,\ref{lower_general}, can be tightened. Let us consider states endowed with a hyperedge of largest cardinality $k_{max}$ and such that the multiplicities of their hyperedges are all multiple of a same divisor of $d$, namely states in the form\footnote{In principle, what follows could be extended to states also featuring single-vertex loops with multiplicities not multiple of $m^*$. For the sake of simplicity, we don't consider this case.}
\begin{equation} \label{all_multiples}
    | H_{n,k_{max}} \rangle = \prod_{e \in E} Z_e^{\beta_e \, m^*} | + \rangle^V \,, \quad
    \beta_e \in \mathbb N_0 \, \forall e \,.
\end{equation}
As a consequence of Eqs.\,\ref{x_effect} and \ref{z_effect}, the procedure described in Prop.\,\ref{prop_reduction}, in this case, can only yield an outcome state in the form
\begin{equation}
    | {G}_\kappa^{ \beta \, m^*} \rangle | q_1 \rangle | q_2 \rangle \dots | q_{n - \kappa } \rangle  \,.
\end{equation}
Then, we can introduce a lower bound for such states as 
\begin{equation} \label{tighter}
E(| H_{n,k_{max}} \rangle) \ge 
\min_{2 \le \kappa \le n, \, 1 \le \beta < d/m^*} E (| G_{\kappa}^{\beta \, m^*} \rangle) =
E(| G^{d/\operatorname{lpf}(d/m^*) }_n \rangle)  \,,
\end{equation}
where the equality comes as a consequence of Prop.\,\ref{ent_mono} and Prop.\,\ref{ent_multi} in the Appendix.

The observation above is of interest whenever $\operatorname{lpf}(d/m^*) \neq \operatorname{lpf}(d)$. In that case, Eq.\,\ref{ekmax} and Eq.\,\ref{tighter} provide different lower bounds. Let us consider for example the state $$| H_{6,4} \rangle = Z_{\{3,4,5,6\}}^{8} Z_{\{1,2,3\}}^2 Z_{\{3,4,5\}}^6 Z_{\{1,5\} }^{4} | + \rangle^V, $$ for $d=10$. Eq.\,\ref{ekmax} gives the inequality
$$
E \left( | H_{6,4} \rangle \right) \ge E \left( | G^5_4 \rangle \right) = 
0.125 \,,$$
whereas Eq.\,\ref{tighter} gives 
$$
E \left( | H_{6,4} \rangle \right) \ge E \left( | G^2_4 \rangle \right) = 
0.512 \,,$$
namely a lower bound more than four times larger.

Notice that Eq.\,\ref{all_multiples} also comprises the special case of uniform-multiplicity hypergraph states. In that case,
\begin{equation}
    | H_{n,k_{max}} \rangle = \prod_{e \in E} Z_e^{m} | + \rangle^V  \,,
\end{equation}
and the lower bound can be expressed as
$$
E(| H_{n,k_{max}} \rangle) \ge E(| G^{d/\operatorname{lpf} \left( 
d / \operatorname{gcd} (m,d) \right)}_n \rangle) \,.
$$

\section{Conclusions}
\label{s:Conclusions}

In this article we investigated the entanglement properties of qudit hypergraph states for an arbitrary number of qudits $n$ and arbitrary finite dimension $d$. We computed multipartite 
entanglement of elementary hypergraph states (those endowed with a single maximum-cardinality hyperedge). We showed that, 
if the multiplicities of two equal-cardinality elementary hypergraph states are such that $\gcd (m_e, d) = \gcd (m_e', d)$, they exhibit the same amount of entanglement, as a consequence of the fact that they are equivalent under LU (as proved in \cite{Guehne_2017}). In particular, we showed that for $d$ prime, all equal-cardinality elementary hypergraph states have the same entanglement, with a simple explicit form for the entanglement content. This supports previous observations that primality of the underlying dimension significantly affects the properties of quantum states \cite{Vourdas_2004}. Moreover, we showed that the multipartite entanglement of a generic hypergraph state is lower bounded by that of an equal-dimension elementary hypergraph state featuring the same number of qudits given by the largest-cardinality hyperedge.

\ack
DM is grateful to Fabio Bernasconi and Claudio Sutrini for useful discussions. AB thanks Alberto Riccardi for fruitful discussions. This research was partly supported by the Italian Ministry of University and Research (MUR) through the ``Dipartimenti di Eccellenza Program (2018-2022)'', Department of Physics, University of Pavia, and by the EU H2020 QuantERA ERA-NET Cofund in Quantum Technologies project QuICHE.

\appendix
\setcounter{section}{0}

\def\thesection{Appendix \Alph{section}}  
\section{Cardinality of string-sets}
\renewcommand{\thesection}{\Alph{section}}

In this section, we derive the formula in Theor.\,\ref{thm_card_generic}. The procedure involves the proof of a few preliminary results, including a recursive formula. We obtain the final result by solving the recursion.

Before starting, we recall some of the observations already reported in Sec.\,\ref{sect:pi_sets} of the main text. 
The definition of string-sets implies that $\zeta_x (n) = \zeta_{x+d} (n)$ $\forall x$, thus leaving freedom in the choice of the indices of the sets. 
We typically consider $x \in \{0,\dots,d\}$. In particular, we use indifferently $\zeta_0 (n)$ and $\zeta_d (n)$, according to what is handier in each specific case. Notice for example that $\sum_{x=0}^{d-1} | \zeta_x (n) | = \sum_{x=1}^d | \zeta_x (n) | = d^n$. We also recall that, by construction, $| \zeta_x (1) | = 1$ $\forall x$.

We start by proving a lemma concerning integer products and congruence classes, and then we derive a first property of the cardinalities of string-sets.

\begin{lem} \label{lem_num_classes}
For $r$ a non-negative integer, and $p$ an integer assuming each value in $\mathbb{Z}_d$ once, the $d$ possible products $r \cdot p$ belong to $d/  \gcd(r,d)$ different congruence classes $\bmod \,d$, in the number of $\gcd(r,d)$ to each class. Such classes can be identified by the values $\gcd(r,d) \cdot i$, with $i \in  \mathbb Z_{d/ \gcd(r,d)}$.
\end{lem}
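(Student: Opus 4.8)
The plan is to factor out the $\gcd$ and reduce everything to a coprime situation. Set $g=\gcd(r,d)$ and write $r=g\,r'$, $d=g\,d'$ with $\gcd(r',d')=1$. The key elementary observation is the scaling identity
\begin{equation*}
 r\,p \equiv g\,\bigl(r'p \bmod d'\bigr) \pmod d \qquad \text{for every integer } p,
\end{equation*}
which follows at once by writing $r'p = d'\lfloor r'p/d'\rfloor + (r'p \bmod d')$ and multiplying through by $g$, since $g d' = d$. In particular every product $r\,p$ lies in a congruence class that is a multiple of $g$, and that class is completely determined by the residue of $r'p$ modulo $d'$.

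Next I would use coprimality: since $\gcd(r',d')=1$, multiplication by $r'$ is a bijection of $\mathbb{Z}_{d'}$, so the map $p \mapsto r'p \bmod d'$ is a bijection of $\mathbb{Z}_{d'}$ onto itself. Now let $p$ run over all of $\mathbb{Z}_d$. Because $d=g\,d'$, the residue $p \bmod d'$ attains each value of $\mathbb{Z}_{d'}$ exactly $g$ times; composing with the above bijection, $r'p \bmod d'$ also attains each value of $\mathbb{Z}_{d'}$ exactly $g$ times. Combining this with the scaling identity, the $d$ products $r\,p$ occupy exactly the $d'$ congruence classes represented by $g\cdot i$ with $i\in\mathbb{Z}_{d'}$, and each of these classes is hit exactly $g$ times. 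Rewriting $g=\gcd(r,d)$ and $d'=d/\gcd(r,d)$ yields the claimed statement.

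The argument covers the degenerate case $r\equiv 0$ verbatim: then $g=d$, $d'=1$, and all $d$ products lie in the single class $\{0\}$, consistent with ``$\gcd(r,d)=d$ products to each of $d/\gcd(r,d)=1$ classes''. I do not anticipate any genuine obstacle; the entire content is the scaling identity $r p \equiv g(r'p \bmod d') \pmod d$ together with the standard fact that multiplication by a unit permutes $\mathbb{Z}_{d'}$.
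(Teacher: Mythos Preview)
Your argument is correct. It is close in spirit to the paper's proof but packaged differently: the paper invokes the standard equivalence $r p \equiv r q \pmod d \iff p \equiv q \pmod{d/\gcd(r,d)}$ to count how many $p$'s land in each class, and then argues separately, via divisibility, that every residue $r p \bmod d$ must be a multiple of $\gcd(r,d)$. You instead derive the single explicit scaling identity $r p \equiv g\,(r'p \bmod d') \pmod d$ and combine it with the bijectivity of multiplication by the unit $r'$ on $\mathbb{Z}_{d'}$; this handles both the counting and the identification of the classes in one stroke. The two arguments rest on the same elementary number theory, but your formulation is a bit more constructive and self-contained, whereas the paper's is a two-step counting argument that cites the equivalence from a reference.
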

\begin{proof}
In general, for $r,p,q$ non-negative integers and $d$ a positive one, the following identity holds \cite{ConcreteMathematics} 
\begin{equation} \label{gcd_mod}
    r \cdot p \equiv r \cdot q \mod \, d \quad \Leftrightarrow \quad
    p \equiv q  \mod \, d/\gcd(r,d) \,.
\end{equation}
We prove the two statements in the lemma separately.
\begin{itemize}
    \item By construction, given a non-negative integer $r$, the $d$ integers in $\mathbb{Z}_d$ belong to  $d/ \gcd(r,d)$ different congruence classes $\bmod \,d/ \gcd(r,d)$, in the number of $\gcd(r,d)$ to each class. Then, for the identity above, given a $p$ assuming each value in $\mathbb{Z}_d$ once, the $d$ possible products $r \cdot p$ belong to $d/ \gcd(r,d)$ different congruence classes $\bmod \,d$, in the number of $\gcd(r,d)$ to each class.
    \item Since $r \cdot p$ is either zero or a multiple of $\gcd(r,d)$, and $d$ is by definition a multiple of $\gcd(r,d)$, the residue of the integer division of $r \cdot p$ by $d$ can only be either zero or a multiple of $\gcd(r,d)$. Therefore, the congruence classes $\bmod \, d$ to which the products $r \cdot p$ belong, are those identified by $\gcd(r,d) \cdot i$, with $i \in  \mathbb Z_{d/ \gcd(r,d)}$.
\end{itemize}
\end{proof}

\begin{prop} \label{pi_property}
For any $x$,
\begin{equation} 
    | \zeta_x (n) | =  | \zeta_{\gcd (x,d)} (n) | \,.
\end{equation}
\end{prop}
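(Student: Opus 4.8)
The plan is to prove Prop.\,\ref{pi_property} by exhibiting an explicit cardinality-preserving bijection $\zeta_{\gcd(x,d)}(n)\to\zeta_x(n)$. Write $g=\gcd(x,d)$. The whole statement rests on one elementary number-theoretic fact: since $g\mid d$ and $g\mid x$, there exists a \emph{unit} $u\in\mathbb Z_d$ (i.e.\ $\gcd(u,d)=1$) with $x\equiv u\,g\mod d$; equivalently, the group of units of $\mathbb Z_d$ acts by multiplication on $\mathbb Z_d$ with orbits precisely the sets $\{y\in\mathbb Z_d:\gcd(y,d)\text{ fixed}\}$, so $x$ and $\gcd(x,d)$ lie in the same orbit. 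Granting this, I would define $\phi:S(n)\to S(n)$ by $\phi(\{q_1,q_2,\dots,q_n\})=\{u q_1\bmod d,\ q_2,\dots,q_n\}$. Since multiplication by a unit permutes $\mathbb Z_d$, $\phi$ is a bijection of $S(n)$, with inverse multiplying the first entry by $u^{-1}$. Because $\prod_{i=1}^{n}s[i]\equiv g\mod d$ holds if and only if $u\prod_{i=1}^{n}s[i]\equiv u g\equiv x\mod d$, the map $\phi$ restricts to a bijection $\zeta_g(n)\to\zeta_x(n)$ and $\phi^{-1}$ to a bijection $\zeta_x(n)\to\zeta_g(n)$. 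Hence $|\zeta_x(n)|=|\zeta_g(n)|=|\zeta_{\gcd(x,d)}(n)|$, which is the claim.

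It remains only to justify the existence of the unit $u$, which I expect to be the sole genuine obstacle in the proof. One clean route uses the Chinese Remainder Theorem applied to $d=\prod_{i\in\mathcal J}p_i^{\ell_i}$: writing $g=\prod_{i\in\mathcal J}p_i^{k_i}$ with $0\le k_i\le\ell_i$, the hypothesis $\gcd(x,d)=g$ forces the exact power of $p_i$ dividing $x$ to be $k_i$ when $k_i<\ell_i$, and forces $p_i^{\ell_i}\mid x$ when $k_i=\ell_i$. In either case the congruence $u_i\,g\equiv x\mod p_i^{\ell_i}$ has a solution $u_i$ coprime to $p_i$: after cancelling the common factor $p_i^{k_i}$ one is left, when $k_i<\ell_i$, with a congruence modulo $p_i^{\ell_i-k_i}$ whose right-hand side is coprime to $p_i$, hence solvable by a unit $u_i$; and when $k_i=\ell_i$ both sides vanish and any unit $u_i$ works. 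Recombining the residues $u_i$ by CRT yields a $u\in\mathbb Z_d$ coprime to every $p_i$, hence to $d$, with $u\,g\equiv x\mod d$.

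Alternatively, the same fact can be obtained without prime powers: by Eq.\,\ref{gcd_mod} the congruence $g\,u\equiv x\mod d$ is equivalent to $u\equiv x/g\mod d/g$, and $\gcd(x/g,\,d/g)=\gcd(x,d)/g=1$, so by Lemma\,\ref{lem_num_classes} this equation has exactly $g$ solutions in $\mathbb Z_d$; one then invokes the standard fact that any residue coprime to a divisor $d/g$ of $d$ lifts to a residue coprime to $d$. Either way the argument is short: there is no induction and no estimate, the factorization machinery of Appendix A is not needed, and the degenerate case $n=1$ is automatically included since then every $\zeta_x(1)$ has a single element.
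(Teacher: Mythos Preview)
Your proof is correct and takes a genuinely different route from the paper. The paper argues by counting: for each $(n-1)$-string $s\in\zeta_r(n-1)$ it uses Lemma~\ref{lem_num_classes} to show that, upon appending a last entry $q_n\in\mathbb Z_d$, $s$ contributes the same number of $n$-strings to $\zeta_x(n)$ as to $\zeta_{\gcd(x,d)}(n)$ (namely $\gcd(r,d)$ if $\gcd(r,d)\mid x$, and zero otherwise, with the divisibility condition being equivalent for $x$ and $\gcd(x,d)$). Summing over all $s$ gives equal cardinalities. Your argument instead produces an explicit global bijection: once one has a unit $u$ with $u\gcd(x,d)\equiv x\bmod d$, the map ``multiply the first coordinate by $u$'' permutes $S(n)$ and carries $\zeta_{\gcd(x,d)}(n)$ onto $\zeta_x(n)$. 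Your approach is arguably cleaner---it works uniformly in $n$, needs no splitting into the last coordinate, and makes the underlying reason transparent (the unit group of $\mathbb Z_d$ acts transitively on residues with a fixed $\gcd$ with $d$). The paper's version, on the other hand, stays entirely within the combinatorial framework already set up by Lemma~\ref{lem_num_classes} and avoids invoking the lifting of units from $\mathbb Z_{d/g}$ to $\mathbb Z_d$, which you correctly identify as the one nontrivial ingredient in your argument.
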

\begin{proof}
For $n=1$, as well as for $x=\gcd (x)$, and for $x \equiv 0 \mod d$, the proof is trivial. Let us consider $n>1$ and $x$ generic, and let $s$ be a string in $S(n-1)$: by construction it belongs to a set $\zeta_r (n-1)$.
The string $s$ can generate strings in $S(n)$ upon appendage of elements $q_n \in \mathbb Z_d$ at its end. Let us evaluate how many strings it generates in $\zeta_x (n)$ and how many in $\zeta_{\gcd(x,d)} (n)$, when all possible values for $q_n$ are used.
For the previous lemma, if $x \equiv 0 \mod \gcd (r, d)$, $s$ generates $\gcd(r,d)$ strings in $\zeta_x (n)$, otherwise zero; 
if $\gcd(x,d) \equiv 0 \mod \gcd (r, d)$, $s$ generates $\gcd(r,d)$ strings in $\zeta_{\gcd(x,d)} (n)$, otherwise zero.
Since $x / \gcd(x,d)$ is coprime to $d$, then $x \equiv 0 \mod \gcd (r, d) \, \Leftrightarrow \, \gcd(x,d) \equiv 0 \mod \gcd (r, d)$, showing that, for any $x$, $s$ generates the same number of strings in both sets (either $\gcd(r,d)$ or zero).
Being the initial choice of $s$ arbitrary, we can conclude that $| \zeta_x (n) | =  | \zeta_{\gcd (x,d)} (n) |$.
\end{proof}

Using the results above, we derive a recursive formula for the cardinality of string-sets. The derivation of the formula is preceded by the proof of a simple lemma involving the use of Euler's totient function.

\begin{lem} \label{phi_divisors}
Let $d$ be an integer and $\mathcal D(d)$ the set of its divisors. For $r$ an integer assuming each value in $\{1,2, ..., d\}$ once, $\gcd (r,d)$ only assumes values in $\mathcal D(d)$. In particular, it assumes each value $m \in \mathcal D(d)$ for $\varphi (d/m)$ times. 
\end{lem}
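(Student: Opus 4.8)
The plan is to prove the two claims of Lemma~\ref{phi_divisors} in sequence, starting from the elementary fact that $\gcd(r,d)$ divides $d$ and then counting multiplicities via a change of variables. First I would observe that for any integer $r$, $\gcd(r,d)$ is by definition a common divisor of $r$ and $d$, hence in particular a divisor of $d$; so as $r$ ranges over $\{1,2,\dots,d\}$, the value $\gcd(r,d)$ can only land in $\mathcal D(d)$. This establishes the first assertion with essentially no computation.

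For the counting statement I would fix a divisor $m \in \mathcal D(d)$ and ask for which $r \in \{1,\dots,d\}$ we have $\gcd(r,d)=m$. Writing $r = m r'$, the condition $m \mid r$ together with $1 \le r \le d$ forces $r' \in \{1,\dots,d/m\}$, and the condition $\gcd(r,d)=m$ becomes $\gcd(r', d/m)=1$ after dividing through by $m$ (using the standard identity $\gcd(m r', m (d/m)) = m\,\gcd(r', d/m)$). Thus the number of valid $r$ equals the number of $r' \in \{1,\dots,d/m\}$ coprime to $d/m$, which is exactly $\varphi(d/m)$ by the definition of Euler's totient function. This gives the second assertion.

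As a consistency remark I would note that summing over all divisors recovers $\sum_{m \mid d} \varphi(d/m) = \sum_{m' \mid d} \varphi(m') = d$, matching the total count of $r \in \{1,\dots,d\}$; this is a useful sanity check but not logically needed. The argument is entirely elementary, so there is no real obstacle here — the only point requiring a little care is the reduction $\gcd(r,d)=m \iff \gcd(r/m, d/m)=1$ for $m \mid \gcd(r,d)$, which must be invoked (or verified in one line) rather than asserted, since the whole lemma hinges on it; everything else is bookkeeping.
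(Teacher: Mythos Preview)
Your proof is correct and follows essentially the same route as the paper's own argument: both note that $\gcd(r,d)$ is automatically a divisor of $d$, then write $r=mq$ and reduce $\gcd(r,d)=m$ to the condition that $q$ be coprime to $d/m$, counting such $q$ via the definition of $\varphi$. Your version is slightly more explicit about the range of $r'$ and the identity $\gcd(m a, m b)=m\gcd(a,b)$, and you add the Gauss-identity sanity check $\sum_{m\mid d}\varphi(d/m)=d$, but the substance is identical.
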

\begin{proof} By definition, $\gcd (r,d)$ is a divisor of $d$ for any value of $r$. Let $m$ be a divisor of $d$: then there exists an integer $p$ such that $d=m p $. The equality $\gcd (r,d)=\gcd (r,m p)=m$ only holds if $r$ is in the form $r=m q$, with $q$ an integer coprime to $p$. There are $\varphi(p)=\varphi(d/m)$ such $q$'s. This since $\varphi(p)$ counts the positive integers up to $p$ that are coprime to $p$
\cite{long1965}.
\end{proof}

\begin{prop} \label{prop_x_recursive}
For $n>1$, the cardinality of the set $\zeta_x (n)$ can be expressed recursively as
\begin{equation} \label{recursive_x_eq}
    | \zeta_x (n) | = \sum_{r \in \mathcal C (x,d)}  | \zeta_r (n-1) |  \,  r \, 
    \varphi (d / r) \,,
\end{equation}
where $\mathcal C (x,d)$ is the set of common divisors of $x$ and $d$.
\end{prop}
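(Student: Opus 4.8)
The plan is to build every $n$-string in $\zeta_x(n)$ from an $(n-1)$-string by appending a last element $q_n\in\mathbb Z_d$, and to count, for each possible "type" of the shorter string, how many valid extensions exist. First I would fix an $(n-1)$-string $s$ and let $r=\gcd\bigl(P(s),d\bigr)$, where $P(s)$ denotes the product of its entries; note $P(s)\equiv 0\bmod r$ always, and more precisely $P(s)$ lies in a congruence class $\bmod\,d$ that is a multiple of $r$ but with $P(s)/r$ coprime to $d/r$. The condition for an extension $s\cup\{q_n\}$ to lie in $\zeta_x(n)$ is $P(s)\cdot q_n\equiv x\bmod d$. By Lemma \ref{lem_num_classes} applied with the integer $P(s)$ (whose gcd with $d$ is $r$), as $q_n$ ranges over $\mathbb Z_d$ the products $P(s)\cdot q_n$ hit exactly the $d/\gcd(P(s),d)=d/r$ congruence classes that are multiples of $r$, each attained $\gcd(P(s),d)=r$ times. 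Hence $s$ generates exactly $r$ strings in $\zeta_x(n)$ if $r\mid x$ (equivalently $\gcd(P(s),d)\mid x$), and none otherwise.

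Next I would sum this contribution over all $(n-1)$-strings. Grouping the shorter strings by the value $r=\gcd\bigl(P(s),d\bigr)$: by Proposition \ref{pi_property} the set of $(n-1)$-strings $s$ with $P(s)$ in a fixed congruence class of gcd $r$ with $d$ all have the same count, and more usefully, I would instead group by the residue class of $P(s)$ and invoke Lemma \ref{phi_divisors}. Concretely, the $(n-1)$-strings with $\gcd\bigl(P(s),d\bigr)=r$ split according to which of the $\varphi(d/r)$ residue classes $\bmod\,d$ (namely those of the form $r\cdot(\text{unit mod }d/r)$) the product $P(s)$ falls into; each such class $y$ contributes $|\zeta_y(n-1)|=|\zeta_r(n-1)|$ strings (again Proposition \ref{pi_property}), and each of those strings generates $r$ strings in $\zeta_x(n)$ provided $r$ is a common divisor of $x$ and $d$. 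Therefore the total contribution of the divisor $r$ is $|\zeta_r(n-1)|\cdot r\cdot\varphi(d/r)$ when $r\in\mathcal C(x,d)$ and $0$ otherwise, which sums to exactly the right-hand side of \eqref{recursive_x_eq}.

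The main subtlety — and the step I would treat most carefully — is the double bookkeeping: making sure that ``$s$ generates $r$ strings in $\zeta_x(n)$'' is counted with the correct multiplicity once one aggregates over all $s$, i.e. that each string in $\zeta_x(n)$ is obtained from exactly one $(n-1)$-string (its prefix), so there is no overcounting, and that the partition of the $(n-1)$-strings by $\gcd\bigl(P(s),d\bigr)$ together with Lemma \ref{phi_divisors} correctly reproduces the weight $r\,\varphi(d/r)$. The ``only if'' direction — that strings with $\gcd\bigl(P(s),d\bigr)\nmid x$ contribute nothing — follows immediately from Lemma \ref{lem_num_classes}, since $P(s)\cdot q_n$ is always a multiple of $\gcd\bigl(P(s),d\bigr)\bmod d$, so it can never equal $x$ modulo $d$ unless that gcd divides $x$. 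Once these points are in place the formula is just the resulting finite sum over $r\in\mathcal C(x,d)$.
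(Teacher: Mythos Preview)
Your proposal is correct and follows essentially the same approach as the paper: both build $n$-strings by appending a last coordinate to $(n-1)$-strings, use Lemma~\ref{lem_num_classes} to count valid extensions via the gcd condition, invoke Proposition~\ref{pi_property} to reduce to gcd representatives, and use Lemma~\ref{phi_divisors} to produce the multiplicity factor $\varphi(d/r)$. The only cosmetic difference is that the paper first sums over all residues $r\in\{1,\dots,d\}$ of $P(s)$ and then collapses via $\gcd(r,d)$, whereas you group directly by $\gcd\bigl(P(s),d\bigr)$; the bookkeeping is the same.
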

\begin{proof}
As observed above, a string in $\zeta_r (n-1)$ can generate strings in $\zeta_x (n)$ upon appendage of elements $q_n \in \mathbb Z_d$ at its end only if $x \equiv 0 \mod \gcd (r, d)$. Whenever this condition is met, it can generate $\gcd (r, d)$ such strings. As a consequence, the cardinality of the set $\zeta_x (n)$ can be expressed recursively as
\begin{gather} \label{rec_part}
\begin{split} 
    | \zeta_x (n) | &= \sum_{r =1}^{d}  
    \delta_{0, \, x \bmod \gcd(r,d)} \,
    | \zeta_r (n-1) |  \, \gcd(r,d) \,  \\
    &= \sum_{r =1}^{d}  
    \delta_{0, \, x \bmod \gcd(r,d)} \,
    | \zeta_{\gcd(r,d)} (n-1) |  \, \gcd(r,d) \,,
\end{split} 
\end{gather}
where we encoded the condition on the possibility of generating strings in the Kronecker delta, and we used Prop.\,\ref{pi_property}. 

Let $\mathcal D(d)$ be the set of the divisors of $d$. For the previous lemma, the sum in Eq.\,\ref{rec_part} can be reshaped as sum over the set $\mathcal D(d)$, with the use of a multiplicity factor for each divisor: 
\begin{equation} 
    | \zeta_x (n) | = \sum_{r \in \mathcal D (d)}  
    \delta_{0, \, x \bmod r} \,
    | \zeta_r (n-1) |  \, r \, 
    \varphi (d / r)  \,.
\end{equation} 
The Kronecker delta can be removed by imposing that the sum is over the common divisors of $x$ and $d$.
\end{proof}

We prove a last lemma and then solve the recursion in Eq.\,\ref{recursive_x_eq}.

\begin{lem} \label{lem_mu}
For $d=\prod_{i \in \mathcal J} p_i^{\ell_i}$ and $r=\prod_{i \in \mathcal J} p_i^{k_i}$ positive integers such that $r|d$, where $\{ p_{i \in \mathcal J} \}$ are the prime factors of $d$, it holds that
\begin{equation} \label{eq_mu}
    r \, \varphi (d/r)   = \varphi (d) \, \prod_{i \in \mathcal J} \left( 1 - \frac{1}{p_i}  \right)^{- \delta_{k_i, \ell_i}} \,.
\end{equation}
\end{lem}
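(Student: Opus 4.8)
The plan is to prove the multiplicative identity $r\,\varphi(d/r) = \varphi(d)\prod_{i\in\mathcal J}(1-1/p_i)^{-\delta_{k_i,\ell_i}}$ by writing both sides as products over the prime factors $p_i$ of $d$ and comparing them prime-by-prime. Since $d=\prod_{i\in\mathcal J}p_i^{\ell_i}$ and $r=\prod_{i\in\mathcal J}p_i^{k_i}$ with $0\le k_i\le\ell_i$, the quotient is $d/r=\prod_{i\in\mathcal J}p_i^{\ell_i-k_i}$, and by multiplicativity of Euler's totient (the footnote after Theorem \ref{thm_card_generic}) we have $\varphi(d/r)=\prod_{i\in\mathcal J}\varphi(p_i^{\ell_i-k_i})$. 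So the whole left-hand side factors as $\prod_{i\in\mathcal J} p_i^{k_i}\varphi(p_i^{\ell_i-k_i})$, and it suffices to check the single-prime case
\begin{equation}
p^{k}\,\varphi(p^{\ell-k}) = \varphi(p^{\ell})\,\Bigl(1-\tfrac1p\Bigr)^{-\delta_{k,\ell}}
\end{equation}
for a prime $p$ and integers $0\le k\le\ell$.

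For the single-prime identity I would split into the two cases distinguished by the Kronecker delta. If $k<\ell$, then $\ell-k\ge 1$, so $\varphi(p^{\ell-k})=p^{\ell-k}-p^{\ell-k-1}=p^{\ell-k}(1-1/p)$, hence $p^{k}\varphi(p^{\ell-k})=p^{\ell}(1-1/p)=\varphi(p^{\ell})$, matching the right-hand side since $\delta_{k,\ell}=0$ makes the correction factor equal to $1$. If $k=\ell$, then $d/r=1$ and $\varphi(1)=1$, so the left-hand side is $p^{\ell}=\varphi(p^{\ell})\cdot p^{\ell}/\varphi(p^{\ell})=\varphi(p^{\ell})(1-1/p)^{-1}$, which is exactly the right-hand side since now $\delta_{k,\ell}=1$. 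Multiplying these per-prime equalities over $i\in\mathcal J$ and using the multiplicativity noted above then gives the claimed formula.

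There is no real obstacle here — the statement is essentially a bookkeeping exercise in the standard formula $\varphi(p^a)=p^a-p^{a-1}$ for $a\ge 1$ together with $\varphi(1)=1$. The only point requiring a moment's care is the degenerate exponent: when $k=\ell$ one must \emph{not} apply the formula $\varphi(p^{\ell-k})=p^{\ell-k}(1-1/p)$ (which would wrongly give $0$), and it is precisely this degeneracy that the factor $(1-1/p)^{-\delta_{k_i,\ell_i}}$ is designed to correct. Organising the argument as "factor into primes, then handle $k_i<\ell_i$ versus $k_i=\ell_i$ separately" makes this transparent.
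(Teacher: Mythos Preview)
Your proof is correct and follows essentially the same approach as the paper: both factor $r\,\varphi(d/r)$ over the prime factors of $d$ and distinguish the cases $k_i<\ell_i$ versus $k_i=\ell_i$. The only cosmetic difference is that the paper groups these cases into index sets $\mathcal J_1$ and $\mathcal J_2$ and carries out the computation in one chain of equalities, whereas you reduce to a single-prime identity and then multiply; the underlying calculation is identical.
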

\begin{proof}
We introduce two subsets of $\mathcal J$: $\mathcal J_1$, containing the indexes such that $0 \le k_i < \ell_i$, and $\mathcal J_2$, containing those such that $k_i = \ell_i$. By construction $\mathcal J_1 \cup \mathcal J_2 = \mathcal J$. 
We have
\begin{gather*}
\begin{split}
    r \, \varphi (d/r)    &= 
    \prod_{i \in \mathcal J_1} p_i^{k_i} 
    \prod_{j \in \mathcal J_2} p_j^{\ell_j}
    \, \varphi \left(  
    \prod_{i' \in \mathcal J_1} p_{i'}^{\ell_{i'} - k_{i'}}  \right) \\
    &= \prod_{i \in \mathcal J_1} p_i^{k_i} 
    \prod_{j \in \mathcal J_2} p_j^{\ell_j} 
    \prod_{i' \in \mathcal J_1} p_{i'}^{\ell_{i'} - k_{i'}} 
    \left( 1 - \frac{1}{p_{i'}} \right) 
    \\
    &= 
    d \, 
    \prod_{i \in \mathcal J_1} \left( 1 - \frac{1}{p_i} \right) \\  
    &= d \, 
    \prod_{i \in \mathcal J} \left( 1 - \frac{1}{p_i} \right)^{1-\delta_{k_i, \ell_i}} \\   
    &= \varphi (d) \,
    \prod_{i \in \mathcal J} \left( 1 - \frac{1}{p_i} \right)^{- \delta_{k_i, \ell_i}} \,.
\end{split}
\end{gather*}
\end{proof}

One may notice that, if $k_i < \ell_i$ $\forall i$, Eq.\,\ref{eq_mu} reduces to the simpler form  $r \, \varphi (d/r) = \varphi (d)$.

\begin{thm}[Cardinality of string-sets] 
Let us consider $d=\prod_{i \in \mathcal J} p_i^{\ell_i}$, with $\{ p_{i \in \mathcal J} \}$ its prime factors, and $x$ such that $gcd(x,d)=\prod_{i \in \mathcal J} p_i^{k_i}$. $\mathcal J$ is the union of two sets: $\mathcal J_1$, containing the $i$'s such that $k_i < \ell_i$, and $\mathcal J_2$, containing those such that $k_i = \ell_i$. The cardinality of the set $\zeta_x (n)$ is
\begin{gather} \label{pi_x_card}
\begin{split}
    | \zeta_x (n) | 
    = &\varphi^{n-1} (d) 
    \left[ \prod_{i \in \mathcal J_1} \binom{n+k_i-1}{k_i} \right]
    \left[ \prod_{i \in \mathcal J_2} 
    \sum_{j=0}^{n-1} 
    \binom{n+\ell_i-2-j}{\ell_i-1}
    \left( 1 - \frac{1}{p_i}  \right)^{-j}
    \right] \,.
\end{split}
\end{gather}
\end{thm}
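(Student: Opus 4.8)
The plan is to solve the recursion of Proposition~\ref{prop_x_recursive},
\[
|\zeta_x(n)| = \sum_{r \in \mathcal C(x,d)} |\zeta_r(n-1)|\, r\, \varphi(d/r),
\]
with base case $|\zeta_x(1)| = 1$, by induction on $n$. By Proposition~\ref{pi_property} the cardinality depends only on $\gcd(x,d)$, so it suffices to treat $x = \gcd(x,d) = \prod_{i\in\mathcal J} p_i^{k_i}$ with $0 \le k_i \le \ell_i$. The common divisors of $x$ and $d$ are then precisely the integers $r = \prod_{i\in\mathcal J} p_i^{j_i}$ with $0 \le j_i \le k_i$; each satisfies $r\,|\,d$, so $\gcd(r,d) = r$ and $|\zeta_r(n-1)|$ depends only on the tuple $(j_i)_i$. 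By Lemma~\ref{lem_mu}, $r\,\varphi(d/r) = \varphi(d)\prod_{i\in\mathcal J}(1 - 1/p_i)^{-\delta_{j_i,\ell_i}}$, and since $j_i \le k_i \le \ell_i$, the exponent $\delta_{j_i,\ell_i}$ is nonzero only when $i \in \mathcal J_2$ and $j_i = \ell_i$.

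I would then establish, by induction on $n$, the product formula
\[
|\zeta_x(n)| = \varphi^{n-1}(d) \prod_{i\in\mathcal J} g_i(k_i, n),
\]
where each $g_i(k,n)$ is defined by $g_i(k,1) = 1$ and the one-variable recursion $g_i(k,n) = \sum_{j=0}^{k} g_i(j,n-1)\,(1-1/p_i)^{-\delta_{j,\ell_i}}$. Substituting the inductive hypothesis for $|\zeta_r(n-1)|$ into the recursion above and using the evaluation of $r\,\varphi(d/r)$, the sum over the common divisors $r$ factors as a product of independent sums over the exponents $j_i$; the power $\varphi^{n-2}(d)$ from the hypothesis times the $\varphi(d)$ from Lemma~\ref{lem_mu} gives $\varphi^{n-1}(d)$, and what remains is exactly $\prod_{i\in\mathcal J}\bigl[\sum_{j_i=0}^{k_i} g_i(j_i,n-1)(1-1/p_i)^{-\delta_{j_i,\ell_i}}\bigr] = \prod_{i\in\mathcal J} g_i(k_i,n)$. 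The base case $n=1$ is simply $|\zeta_x(1)| = 1 = \varphi^0(d)\prod_i 1$.

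It then remains to solve the recursions $g_i$. When computing $g_i(k,n)$ with $k < \ell_i$, the recursion only ever refers to $g_i(j,\cdot)$ with $j \le k < \ell_i$, so all the deltas vanish and $g_i(k,n) = \sum_{j=0}^{k} g_i(j,n-1)$; an induction on $n$ together with the hockey-stick identity $\sum_{j=0}^{k}\binom{n-2+j}{j} = \binom{n+k-1}{k}$ yields $g_i(k,n) = \binom{n+k-1}{k}$. This settles every factor with $i\in\mathcal J_1$ (where $k_i < \ell_i$). For $i\in\mathcal J_2$ one has $k_i = \ell_i$, and inserting the closed form just found for the lower exponents into $g_i(\ell_i,n) = \sum_{j=0}^{\ell_i-1}\binom{n-2+j}{j} + (1-1/p_i)^{-1} g_i(\ell_i,n-1)$ collapses the sum to $\binom{n+\ell_i-2}{\ell_i-1}$; solving the resulting first-order linear recursion in $n$ by unrolling it, and checking the boundary term against $g_i(\ell_i,1) = 1 = \binom{\ell_i-1}{\ell_i-1}$, gives $g_i(\ell_i,n) = \sum_{j=0}^{n-1}(1-1/p_i)^{-j}\binom{n+\ell_i-2-j}{\ell_i-1}$. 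Combining the $\mathcal J_1$ and $\mathcal J_2$ factors reproduces Eq.~\ref{pi_x_card}.

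The step I expect to require the most care is the decoupling: showing that the single recursion over the common divisors of $x$ and $d$ genuinely factors into a product of independent per-prime recursions. This relies on three facts aligning — that $|\zeta_r(n-1)|$ depends only on $\gcd(r,d)$ (Proposition~\ref{pi_property}), that both $\varphi$ and the correction factor of Lemma~\ref{lem_mu} are multiplicative across the prime powers dividing $d$, and that a divisor of $d$ is uniquely encoded by its tuple of prime-power exponents, so that the sum over $r$ splits as a product of sums over the $j_i$. Once this is in place, the rest of the argument is the routine solution of two elementary recursions via the hockey-stick identity and unrolling.
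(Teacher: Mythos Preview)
Your proof is correct. It uses the same ingredients as the paper (Proposition~\ref{pi_property}, Lemma~\ref{lem_mu}, and the recursion of Proposition~\ref{prop_x_recursive}) but organizes them differently. The paper unrolls the recursion completely into an $(n-1)$-fold nested sum over chains of divisors $r_1 \mid r_2 \mid \cdots \mid r_{n-1}$, then factors that nested sum across primes using coprimality, and finally evaluates each per-prime piece by a stars-and-bars count of monotone exponent sequences. You instead establish the product structure $|\zeta_x(n)| = \varphi^{n-1}(d)\prod_i g_i(k_i,n)$ by induction on $n$ right away, reducing everything to the single-prime recursions $g_i$, which you then solve with the hockey-stick identity and a linear unrolling. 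The two arguments are equivalent in content (your hockey-stick step is exactly the paper's count of monotone sequences, and your unrolling of the $\mathcal J_2$ recursion matches their weighted count over the number of indices equal to $p_i^{\ell_i}$), but your inductive packaging is somewhat cleaner: the decoupling is done once and for all in the inductive step, whereas the paper has to manage the full divisor-chain structure before factoring. Either way, the point you flag as most delicate --- that the sum over common divisors of $x$ and $d$ factors into independent sums over the prime exponents --- is handled correctly in your argument, since the inductive hypothesis already puts $|\zeta_r(n-1)|$ in product form and Lemma~\ref{lem_mu} does the same for $r\,\varphi(d/r)$.
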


\begin{proof}

We introduce $\eta (r) = r \, \varphi (d/r)$, and $\mathcal D (r)$ for the set of the divisors of an integer $r$. By iterating on $n$ in the recursive formula in Prop.\,\ref{prop_x_recursive}, we obtain
\begin{gather} \label{iter}
\begin{split}
    | \zeta_x (n) | 
    &= \sum_{r_1 \in \mathcal C (x,d)}  | \zeta_{r_1} (n-1) |  \, \eta ( r_1 ) \\
    &= \sum_{r_1 \in \mathcal C (x,d)} \sum_{r_2 \in \mathcal D (r_1)}  
    | \zeta_{r_2} (n-2) |  \, \eta (r_1) \, \eta (r_2) \\ 
    &= \sum_{r_1 \in \mathcal C (x,d)} \sum_{r_2 \in \mathcal D (r_1)} \dots 
    \sum_{r_{n-1} \in \mathcal D (r_{n-2})} | \zeta_{r_{n-1}} (1) |  \,
    \eta (r_1) \, \eta (r_2) \dots \eta (r_{n-1}) \\
    &= \sum_{r_1 \in \mathcal C (x,d)} \sum_{r_2 \in \mathcal D (r_1)} \dots 
    \sum_{r_{n-1} \in \mathcal D (r_{n-2})}   \,
    \eta (r_1) \, \eta (r_2) \dots \eta (r_{n-1}) \, ,
\end{split}
\end{gather}
where the sequences of sum indices $r_1, r_2, ...,r_{n-1}$ are such that $r_{t'} | r_t$ for $t' > t$.

Let us focus at first on the special case where $\mathcal J_1 \neq \emptyset$ and $\mathcal J_2 = \emptyset$. In this case, for Lemma\,\ref{lem_mu}, $\eta (r_t) = \varphi (d)$ $\forall t$, which simplifies the previous equation to
\begin{equation}
    | \zeta_x (n) | 
    = \varphi^{n-1} (d) \sum_{r_1 \in \mathcal C (x,d)} \sum_{r_2 \in \mathcal D (r_1)} \dots 
    \sum_{r_{n-1} \in \mathcal D (r_{n-2})}   \, 1 \,.
\end{equation}
Since, in this case, $\gcd (x, d) = \prod_{i \in \mathcal J_1} p_i^{k_i}$, and using the coprimality of the $p_i$'s among themselves, the previous equation can be reshaped as
\begin{equation}
    | \zeta_x (n) | 
    = \varphi^{n-1} (d)  
    \prod_{i \in \mathcal J_1 } 
    \left( \sum_{r_1 \in \mathcal D (p_i^{k_i})}
    \sum_{r_2 \in \mathcal D (r_1)} \dots 
    \sum_{r_{n-1} \in \mathcal D (r_{n-2})}   \, 1 \right)
    \,.
\end{equation} 
In the sum inside the round brackets, the possible sequences of sum indices can be expressed as $r_{1} = p_i^{u_{1}}, r_{2} = p_i^{u_{2}}, \dots, r_{n-1} = p_i^{u_{n-1}}$, with $k_i \ge u_1 \ge u_ 2 \ge \dots \ge u_{n-1} \ge 0 $ (this since $r_{t'} | r_t$ for $t' > t$). Each sequence contributes to the sum with 1. Consequently, performing the sum is equivalent to counting the existing sequences.
For the inequality above, there is only one sequence featuring $N_0$ indices equal to 0, $N_1$ equal to 1, ..., $N_{k_i}$ equal to $k_i$ (with the condition $\sum_{j=0}^{k_i} N_j = n-1$).
Therefore, counting the existing sequences amounts to counting the possible ways of distributing $n-1$ identical objects among $k_i+1$ distinguishable boxes, which is $\binom{n+k_i-1}{k_i}$ \cite{Combinatorics}. Eq.\,\ref{iter} results in
\begin{equation}
    | \zeta_x (n) | 
    = \varphi^{n-1} ( d ) \prod_{i \in \mathcal J_1} \binom{n+k_i-1}{k_i} \,.
\end{equation}

Let us move to the general case. According to Lemma\,\ref{lem_mu}, in Eq.\,\ref{iter}, it is not true any more that $\eta (r_t) = \varphi (d)$ $\forall t$. Instead, for an $r_t=\prod_{i \in \mathcal J} p_i^{j_i} $, 
$\eta(r_t) = \varphi (d) \, \prod_{i \in \mathcal J} 
\left( 1 - \frac{1}{p_i}  \right)^{- \delta_{j_i, \ell_i}}=\varphi (d) \, \prod_{i \in \mathcal J} 
\left( 1 - \frac{1}{p_i}  \right)^{- \delta_{p_i^{j_i}, p_i^{\ell_i}}}$.
Using again the coprimality of the $p_i$'s, and distinguishing between the contribution of the indexes in $\mathcal J_1$ and $\mathcal J_2$, Eq.\,\ref{iter} becomes 
\begin{gather} 
\begin{split}
    | \zeta_x (n) | 
    &= \varphi^{n-1} ( d ) \, \prod_{i \in \mathcal J} \left(  \sum_{r_1 \in \mathcal D (p^{k_i})}
    \sum_{r_2 \in \mathcal D (r_1)}
    \dots 
    \sum_{r_{n-1} \in \mathcal D (r_{n-2})}   \, 
    \left( 1 - \frac{1}{p_i}  \right)^{- \sum_{t=1}^{n-1} \delta_{r_t, p_i^{\ell_i}} } \right) \\
        &= \varphi^{n-1} ( d ) \, \left[ \prod_{i \in \mathcal J_1} \binom{n+k_i-1}{k_i} \right] \cdot \\
    &\cdot \left[ \prod_{i \in \mathcal J_2} \left(  \sum_{r_1 \in \mathcal D (p^{\ell_i})}  
    \sum_{r_2 \in \mathcal D (r_1)} \dots 
    \sum_{r_{n-1} \in \mathcal D (r_{n-2})}   \, 
    \left( 1 - \frac{1}{p_i}  \right)^{- \sum_{t=1}^{n-1} \delta_{r_t, p_i^{\ell_i}}  } \right) \right] \,.
\end{split}
\end{gather}

In the sum inside the round brackets, each sequence of sum indices contributes to the sum with $\left( 1 - \frac{1}{p_i}  \right)^{-j}$, where $j$ is the number of indices equal to $p_i^{\ell_i}$. For each $j$ there exist $\binom{n+\ell_i-2-j}{\ell_i-1}$ sequences, namely the number of ways of distributing $n-1-j$ identical objects among $\ell_i$ distinguishable boxes. Being $j \in \{0, \dots, n-1\}$,  then the sum is equivalent to
\begin{gather} 
\begin{split}
    \sum_{j=0}^{n-1} 
    \binom{n+\ell_i-2-j}{\ell_i-1}
    \left( 1 - \frac{1}{p_i}  \right)^{-j} \,.
\end{split}
\end{gather}
Substituting this result back into the previous equation proves the theorem. 
\end{proof}

\def\thesection{Appendix \Alph{section}}  
\section{Properties of the cardinalities of string-sets}
\renewcommand{\thesection}{\Alph{section}}

We prove the properties of the cardinalities of string-sets listed in Subsec.\,\ref{pi_properties} of the main text. In particular, Prop.\,\ref{lem_equality_pi} contains a proof for Eq.\,\ref{k_identity}, Prop.\,\ref{ineq_omega} one for Eq.\,\ref{real_positive}, and  Prop.\,\ref{horrible_identity} one for Eq.\,\ref{alpha_identity}.

\begin{prop}  \label{lem_equality_pi}
For $\alpha$ and $0<k<n$ integers, it holds that 
\begin{gather} \label{lem_equality_eq}
\begin{split}
\sum_{r=0}^{d-1} \omega^{\alpha \, r} \vert \zeta_r (n) \vert  
&= \sum_{x=0}^{d-1} \sum_{y=0}^{d-1} \omega^{\alpha \, xy}
\vert \zeta_{x} (k) \vert \vert \zeta_{y} (n-k) \vert   \, .
\end{split}
\end{gather}
\end{prop}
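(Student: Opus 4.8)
The plan is to prove this combinatorial identity by giving both sides a common counting interpretation in terms of $n$-strings. The left-hand side, $\sum_{r=0}^{d-1}\omega^{\alpha r}|\zeta_r(n)|$, can be rewritten as a single sum over all of $S(n)$: since each $s\in S(n)$ lies in exactly one $\zeta_r(n)$ (namely $r=P(s)\bmod d$), we have
\begin{equation}
\sum_{r=0}^{d-1}\omega^{\alpha r}|\zeta_r(n)| = \sum_{s\in S(n)}\omega^{\alpha P(s)} \,.
\end{equation}
The first step is therefore to establish this reformulation carefully.

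Next I would split each $n$-string $s\in S(n)$ into its first $k$ entries $s_A$ and its last $n-k$ entries $s_B$, so that $S(n)\cong S(k)\times S(n-k)$ and $P(s)=P(s_A)\cdot P(s_B)$. This gives
\begin{equation}
\sum_{s\in S(n)}\omega^{\alpha P(s)} = \sum_{s_A\in S(k)}\sum_{s_B\in S(n-k)}\omega^{\alpha\, P(s_A)\,P(s_B)} \,.
\end{equation}
Then I would group the strings in $S(k)$ by the value $x=P(s_A)\bmod d$ and those in $S(n-k)$ by the value $y=P(s_B)\bmod d$. The exponent $\alpha P(s_A)P(s_B)$ depends only on $x$, $y$ (and $\alpha$) modulo $d$, because $\omega^d=1$, so $\omega^{\alpha P(s_A)P(s_B)}=\omega^{\alpha xy}$; collecting the $|\zeta_x(k)|$ strings contributing each $x$ and the $|\zeta_y(n-k)|$ strings contributing each $y$ yields exactly the right-hand side.

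The two potential subtleties worth flagging explicitly are: (i) that $\omega^{\alpha P(s)}$ is well-defined, i.e. depends only on $P(s)\bmod d$, which is immediate from $\omega^d=1$ — and the same well-definedness is what lets us replace $P(s_A)$ by $x$ and $P(s_B)$ by $y$ in the exponent even though the actual products may exceed $d$; and (ii) that the decomposition $S(n)=S(k)\sqcup$-wise into $S(k)\times S(n-k)$ is a genuine bijection, which is clear from Definition \ref{def_nstring}. Neither of these is a real obstacle; the ``hard part'' here is essentially bookkeeping — making sure the modular reductions in the exponent are justified at each regrouping — rather than any deep argument. The whole proof is a two- or three-line change of summation variable once the string/state correspondence from the end of Section \ref{sect:pi_sets} is in hand.
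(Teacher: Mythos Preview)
Your proposal is correct and follows essentially the same approach as the paper: both introduce the common expression $\sum_{s\in S(n)}\omega^{\alpha P(s)}$, obtain the left-hand side by partitioning $S(n)$ into the sets $\zeta_r(n)$, and obtain the right-hand side by splitting each $n$-string into a $k$-string and an $(n-k)$-string and partitioning each factor by its product modulo $d$. The only cosmetic difference is that the paper writes the two directions as separate chains of equalities, whereas you phrase it as a single flow from left to right.
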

\begin{proof}
For $\alpha \equiv 0 \mod d$, $\omega^{\alpha r}=1$, and the proof is straightforward. Let us consider the general case.
We introduce the short-hand notation $P(s)= \prod_{i=1}^n s[i]$ for the product of the elements of a string $s$, and we prove the statement by proving that both sides of Eq.\,\ref{lem_equality_eq} are equal to $\sum_{s \in S(n)} \omega^{\alpha \, P(s)}$.
For the l.h.s. we have
\begin{gather*}
\begin{split}
\sum_{s \in S(n)} \omega^{\alpha \, P(s)}  &= 
\sum_{r=0}^{d-1}  
\sum_{s \in \zeta_r(n)}  
\omega^{\alpha \, P(s)} \\
&= 
\sum_{r=0}^{d-1}  
\sum_{s \in \zeta_r(n)} 
\omega^{\alpha \, r } \\
&= \sum_{r=0}^{d-1}  \omega^{\alpha \, r}
\sum_{s \in \zeta_r (n)} 1 \\
&= \sum_{r=0}^{d-1} \omega^{\alpha \, r}
\vert \zeta_{r} (n) \vert \, ,
\end{split}
\end{gather*}
and for the r.h.s., for any arbitrary choice of a positive integer $k<n$,
\begin{gather*}
\begin{split}
\sum_{s \in S(n)} \omega^{\alpha \, P(s)}  &= 
\sum_{s \in S_k} 
\sum_{s' \in S_{n-k}}  
\omega^{\alpha \, P(s) \, P(s')} \\
&= \sum_{x=0}^{d-1} \sum_{y=0}^{d-1} 
\sum_{s \in \zeta_x (k)} 
\sum_{s' \in \zeta_y (n-k)}  
\omega^{\alpha \, P(s) \, P(s')} \\
&= \sum_{x=0}^{d-1} \sum_{y=0}^{d-1} 
\sum_{s \in \zeta_x (k)} 
\sum_{s' \in \zeta_y (n-k)}  
\omega^{\alpha \, xy} \\
&= \sum_{x=0}^{d-1} \sum_{y=0}^{d-1} \omega^{\alpha \, xy}
\sum_{s \in \zeta_x (k)} 
\sum_{s' \in \zeta_y (n-k)}  
1  \\
&= \sum_{x=0}^{d-1} \sum_{y=0}^{d-1} \omega^{\alpha \, xy}
\vert \zeta_{x} (k) \vert \vert \zeta_{y} (n-k) \vert   \, .
\end{split}
\end{gather*}
\end{proof}

\begin{prop}  \label{ineq_omega}
For $\alpha$  and $n>1$ integers, it holds that 
\begin{equation} \label{ineq_complex}
    d\vert \zeta_0(n-1) \vert \le 
    \sum_{r=0}^{d-1} \omega^{\alpha \, r} \vert \zeta_r (n) \vert  \le 
    d^n     \,.
\end{equation}
\end{prop}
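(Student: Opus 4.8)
The plan is to collapse the middle quantity into a single, manifestly real and non-negative sum, after which both inequalities are immediate. Since $n>1$, the value $k=1$ is admissible in Prop.\,\ref{lem_equality_pi}; applying that proposition with $k=1$ and using $|\zeta_x(1)|=1$ for every $x$ gives
\[
\sum_{r=0}^{d-1}\omega^{\alpha r}\,|\zeta_r(n)|=\sum_{x=0}^{d-1}\sum_{y=0}^{d-1}\omega^{\alpha xy}\,|\zeta_y(n-1)|=\sum_{y=0}^{d-1}|\zeta_y(n-1)|\sum_{x=0}^{d-1}\bigl(\omega^{\alpha y}\bigr)^{x}.
\]
Next I would evaluate the inner geometric sum: $\omega^{\alpha y}$ is a $d$-th root of unity, so $\sum_{x=0}^{d-1}(\omega^{\alpha y})^{x}$ equals $d$ when $\omega^{\alpha y}=1$, i.e.\ when $d\mid\alpha y$, and equals $0$ otherwise. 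Hence
\[
\sum_{r=0}^{d-1}\omega^{\alpha r}\,|\zeta_r(n)|=d\sum_{\substack{0\le y\le d-1\\ d\mid\alpha y}}|\zeta_y(n-1)|,
\]
which is $d$ times a sum of non-negative integers, so in particular the left-hand side is real and non-negative.

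Both bounds then follow directly from this closed form. For the lower bound, $y=0$ always satisfies $d\mid\alpha y$, so the term $d\,|\zeta_0(n-1)|$ always occurs; discarding the remaining (non-negative) terms gives $\sum_{r}\omega^{\alpha r}|\zeta_r(n)|\ge d\,|\zeta_0(n-1)|$. For the upper bound, extending the index set to all $y\in\{0,\dots,d-1\}$ yields $\sum_{r}\omega^{\alpha r}|\zeta_r(n)|\le d\sum_{y=0}^{d-1}|\zeta_y(n-1)|=d\,|S(n-1)|=d\cdot d^{n-1}=d^{n}$, using $\sum_{y=0}^{d-1}|\zeta_y(n-1)|=d^{n-1}$.

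I do not expect a genuine obstacle here; the only step requiring care is the case analysis for the geometric sum, i.e.\ checking that it truly vanishes whenever $\omega^{\alpha y}\neq1$ rather than merely being small, together with the degenerate case $\alpha\equiv0\bmod d$, in which every $y$ survives and the sum equals $d^{n}$, saturating the upper bound. An equivalent derivation, bypassing Prop.\,\ref{lem_equality_pi}, would start from $\sum_{r}\omega^{\alpha r}|\zeta_r(n)|=\sum_{s\in S(n)}\omega^{\alpha P(s)}$, split the outer sum according to the last entry of the string, and sum the resulting geometric series in that entry; this reproduces exactly the same reduction.
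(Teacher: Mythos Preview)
Your proof is correct and follows essentially the same route as the paper: both apply Prop.\,\ref{lem_equality_pi} with $k=1$, reduce the inner sum over $x$ to the root-of-unity identity $\sum_{x=0}^{d-1}\omega^{\alpha xy}\in\{0,d\}$, and read off the two bounds from the resulting non-negative sum. Your write-up is in fact slightly more explicit about the closed form $d\sum_{d\mid\alpha y}|\zeta_y(n-1)|$, but the argument is the same.
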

\begin{proof}
Using Prop.\,\ref{lem_equality_pi} with the choice of $k=1$, we have
\begin{gather}
\begin{split}
\sum_{r=0}^{d-1} \omega^{\alpha \, r} \vert \zeta_r (n) \vert  
&= \sum_{x=0}^{d-1} \sum_{y=0}^{d-1} \omega^{\alpha \, xy}
\vert \zeta_{y} (n-1) \vert   \\
&= d \vert \zeta_{0} (n-1) \vert + 
\sum_{y=1}^{d-1} \vert \zeta_{y} (n-1) \vert \sum_{x=0}^{d-1} \omega^{\alpha \, xy} \, .
\end{split}
\end{gather}
Let us focus on the sum $\sum_{x=0}^{d-1} \omega^{\alpha xy}$. If $\alpha y \equiv 0 \mod d$, the sum gives $d$, otherwise it gives $0$ (sum of powers of primitive complex roots of unity) \cite{Ledermann_1967}. The lower bound in Eq.\,\ref{ineq_complex} can be attained for $\alpha$ coprime to $d$, the upper bound for $\alpha \equiv 0 \mod d$.
\end{proof}

\begin{lem}  \label{lem_identity}
For $n,\ell,p$ positive integers, with $p>1$, the following identity holds:
\begin{gather*}
\begin{split}
    p^{-\ell}
    \sum_{j=0}^{n-1} 
    \binom{n+\ell-2-j}{\ell-1}
    \left( 1 - \frac{1}{p}  \right)^{-(j+1)}
    =
    \left( 1 - \frac{1}{p}  \right)^{-n} - \sum_{j=0}^{\ell-1}   
    \binom{n+j-1}{j} p^{-j} \,.
\end{split}
\end{gather*}
\end{lem}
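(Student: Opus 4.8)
The plan is to regard both sides as power series (equivalently, functions analytic near the origin) in the single variable $q:=1/p\in(0,\tfrac12]$ and to compare coefficients. First I would substitute $p^{-1}=q$, turning the asserted identity into
\[
   q^{\ell}\sum_{j=0}^{n-1}\binom{n+\ell-2-j}{\ell-1}(1-q)^{-(j+1)}
   \;=\;(1-q)^{-n}-\sum_{j=0}^{\ell-1}\binom{n+j-1}{j}q^{j}.
\]
By the negative-binomial expansion $(1-q)^{-n}=\sum_{i\ge0}\binom{n+i-1}{i}q^{i}$, and since $\binom{n+i-1}{i}=\binom{n+i-1}{n-1}$, the right-hand side is precisely the tail $\sum_{i\ge\ell}\binom{n+i-1}{n-1}q^{i}$. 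Hence it suffices to check that the left-hand side has no terms of degree below $\ell$ (immediate from the prefactor $q^{\ell}$) and that its coefficient of $q^{m}$ equals $\binom{n+m-1}{n-1}$ for every $m\ge\ell$.

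Next I would expand each factor on the left via $(1-q)^{-(j+1)}=\sum_{i\ge0}\binom{i+j}{j}q^{i}$ and extract the coefficient of $q^{m}$; writing $m=\ell+m'$ with $m'\ge0$, the whole statement reduces to the finite binomial identity
\[
   \sum_{j=0}^{n-1}\binom{n+\ell-2-j}{\ell-1}\binom{m'+j}{j}=\binom{n+\ell+m'-1}{n-1}.
\]
Using $\binom{n+\ell-2-j}{\ell-1}=\binom{(\ell-1)+(n-1-j)}{n-1-j}$ and substituting $k=n-1-j$ puts the left side in the form $\sum_{k=0}^{N}\binom{s+k}{k}\binom{r+N-k}{N-k}$ with $N=n-1$, $s=\ell-1$, $r=m'$. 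This is the standard convolution $\sum_{k=0}^{N}\binom{s+k}{k}\binom{r+N-k}{N-k}=\binom{r+s+N+1}{N}$, obtained by comparing the coefficient of $x^{N}$ on the two sides of $(1-x)^{-(s+1)}(1-x)^{-(r+1)}=(1-x)^{-(r+s+2)}$. Its value $\binom{r+s+N+1}{N}=\binom{m'+\ell+n-1}{n-1}$ is exactly what is required, which closes the argument.

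I expect essentially no conceptual obstacle: the bulk of the work is bookkeeping, namely keeping the three truncations consistent (the upper limit $n-1$ in the sum, the cutoff at $j=\ell-1$ on the right, and the shift $m=\ell+m'$) and justifying the rearrangement of series — which is legitimate since every series in sight converges absolutely for $0<q<1$, or one may instead carry out the whole computation formally in $\mathbb{Q}[[q]]$. The only step that takes a moment's recognition is identifying the surviving double sum as a Vandermonde-type convolution; once that is seen, everything follows at once. (An induction on $n$ using Pascal's rule for $\binom{n+\ell-2-j}{\ell-1}$ together with the recursion $(1-q)^{-(j+1)}=(1-q)^{-j}+q(1-q)^{-(j+1)}$ is also possible, but noticeably more tedious.)
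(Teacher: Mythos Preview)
Your proposal is correct and follows essentially the same approach as the paper: both arguments rest on the negative-binomial expansion $(1-q)^{-n}=\sum_{j\ge0}\binom{n+j-1}{j}q^{j}$ together with the Vandermonde-type convolution $\sum_{j=0}^{r}\binom{r+s-j}{s}\binom{j+k}{k}=\binom{r+s+k+1}{s+k+1}$, the only difference being that the paper sums the expanded series directly while you phrase the same computation as a coefficient comparison.
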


\begin{proof}
We make use of the two identities
\begin{equation} \label{expansion_binomial}
    \left( 1 - \frac{1}{p}  \right)^{-n} = 
\sum_{j=0}^\infty \binom{n+j-1}{j} p^{-j} \quad \text{for} \, n>0 \; \text{and} \; |p|>1\,,
\end{equation}
\begin{equation}
\sum_{j=0}^{r} 
\binom{r+s-j}{s}  \binom{j+k}{k} =
\binom{r+s+k+1}{s+k+1} \,.
\end{equation}

The first one can be easily proved by expanding the l.h.s., whereas a proof for the second one can be found in Ref.\,\cite{MathStackBinomialIdentity}. We have

\begin{gather*}
\begin{split}
    p^{-\ell} &
    \sum_{j=0}^{n-1} 
    \binom{n+\ell-2-j}{\ell-1}
    \left( 1 - \frac{1}{p}  \right)^{-(j+1)} \\
    &= 
    p^{-\ell} 
    \sum_{j=0}^{n-1} 
    \binom{n+\ell-2-j}{\ell-1} \sum_{k=0}^\infty \binom{j+k}{k} p^{-k} \\
    &= 
    \sum_{k=0}^\infty p^{-(\ell+k)} 
    \sum_{j=0}^{n-1} 
    \binom{n+\ell-2-j}{\ell-1}  \binom{j+k}{k} \\
    &= 
    \sum_{k=0}^\infty  p^{-(\ell+k)}  
    \binom{n+\ell+k-1}{\ell+k}   \\
    &= 
    \sum_{j=0}^\infty   
    \binom{n+j-1}{j} p^{-j} - 
    \sum_{j=0}^{\ell-1}   
    \binom{n+j-1}{j} p^{-j}  \\
    &= 
    \left( 1 - \frac{1}{p}  \right)^{-n} - \sum_{j=0}^{\ell-1}   
    \binom{n+j-1}{j} p^{-j} \,.
\end{split}
\end{gather*}

\end{proof}

\begin{prop}  \label{horrible_identity}
For a dimension $d= \prod_{i \in \mathcal J} p_{i}^{\ell_i}$, with $\{p_{i \in \mathcal J}\}$ the prime factors of $d$, and $m \in \mathbb{Z}_d$ such that $\gcd(m,d)=\prod_{i \in \mathcal J} p_{i}^{k_i}$, it holds that
\begin{equation} 
   \frac{1}{d^n} 
  \sum_{y=0}^{d-1}  \sum_{t=0}^{d-1} \omega^{m \, ty}  \vert \zeta_{y} (n-1) \vert =
   \prod_{i \in \mathcal J} 
    \left[ 1- (1- \delta_{k_i, \ell_i} ) \left( 1 - \frac{1}{p_i} \right)^{n}
    \sum_{j=0}^{\ell_i-k_i-1} p_i^{-j} \binom{n+j-1}{j}
    \right] \,.
\end{equation}
\end{prop}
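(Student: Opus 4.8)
The plan is to turn the oscillatory double sum into a pure counting problem, then shrink the modulus and factor the count over the primes dividing $d$; throughout write $\mathcal J_1=\{i\in\mathcal J:k_i<\ell_i\}$ and $\mathcal J_2=\mathcal J\setminus\mathcal J_1$, as in Theorem~\ref{thm_card_generic}. The inner sum $\sum_{t=0}^{d-1}\omega^{mty}$ is a geometric sum of $d$-th roots of unity, hence $d$ if $d\mid my$ and $0$ otherwise, so the left-hand side equals $d^{-(n-1)}\sum_{y\,:\,d\mid my}\lvert\zeta_y(n-1)\rvert$, which by Definition~\ref{def_pi_set} is $d^{-(n-1)}$ times the number of $(n-1)$-strings $s$ with $d\mid m\prod_i s[i]$. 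Set $g=\gcd(m,d)$. By the divisibility identity used earlier (Eq.~\ref{gcd_mod} with one side zero), $d\mid m\prod_i s[i]$ is equivalent to $(d/g)\mid\prod_i s[i]$, and $d/g=\prod_{i\in\mathcal J_1}p_i^{\ell_i-k_i}$ since the primes with $k_i=\ell_i$ drop out. As the residue of $\prod_i s[i]$ modulo $d/g$ depends only on the residues $s[i]\bmod(d/g)$, and the reduction $\mathbb Z_d\to\mathbb Z_{d/g}$ is surjective and uniformly $g$-to-one, this count is $g^{\,n-1}$ times the number of $(n-1)$-strings over $\mathbb Z_{d/g}$ with product $\equiv 0\bmod(d/g)$. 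After dividing, the left-hand side is exactly the probability that the product of $n-1$ independent uniform elements of $\mathbb Z_{d/g}$ vanishes modulo $d/g$.

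By the Chinese Remainder Theorem this probability is the product over $i\in\mathcal J_1$ of the same probability with modulus $p_i^{\ell_i-k_i}$, while a prime $i\in\mathcal J_2$ does not divide $d/g$ and contributes the factor $1$ — exactly the value of the $i$-th bracket in the assertion when $\delta_{k_i,\ell_i}=1$. It therefore suffices to prove the one-prime statement: for a prime power $p^{L}$ with $L\ge1$, the probability that the product of $N$ independent uniform elements of $\mathbb Z_{p^{L}}$ is $\equiv 0\bmod p^{L}$ equals $1-(1-1/p)^{N}\sum_{j=0}^{L-1}p^{-j}\binom{N+j-1}{j}$; applying this with $N=n-1$, $L=\ell_i-k_i$ and multiplying over $i\in\mathcal J_1$ then gives the claim.

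For the one-prime computation, the complementary event $\sum_j v_p(s_j)\le L-1$ (with $v_p$ the $p$-adic valuation) forces every $s_j\neq 0$, and since $\Pr[v_p(s_j)=a]=p^{-a}(1-1/p)$ for $0\le a\le L-1$ while the non-negative solutions of $a_1+\dots+a_N=A$ number $\binom{N+A-1}{A}$, the complementary probability is $(1-1/p)^{N}\sum_{A=0}^{L-1}p^{-A}\binom{N+A-1}{A}$, which is the claim. The route matching the tools already in the paper instead computes the number of length-$N$ strings over $\mathbb Z_{p^{L}}$ with vanishing product from Theorem~\ref{thm_card_generic} (the index $x=0$ puts every prime in the $\mathcal J_2$ family), reduces the prefactor $\varphi^{N-1}(p^{L})/p^{LN}$ to $(1-1/p)^{N-1}p^{-L}$, and then invokes Lemma~\ref{lem_identity} with $\ell\mapsto L$, $n\mapsto N$ to bring $\sum_{j}\binom{N+L-2-j}{L-1}(1-1/p)^{-j}$ into closed form.

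I expect the delicate point to be precisely this last step: reconciling the power of $\varphi(p^{L})$ against $p^{LN}$, absorbing the stray factor $1-1/p$ caused by the $(j+1)$-versus-$j$ shift in the exponent appearing in Lemma~\ref{lem_identity}, and matching the argument of the binomial coefficient after the substitutions. By contrast, the reduction of the modulus and the Chinese-remainder factorization are routine, provided one keeps the surviving primes $\mathcal J_1$ carefully separated from the trivial primes $\mathcal J_2$ at every stage.
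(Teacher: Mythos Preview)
Your argument is correct and takes a genuinely different route from the paper. The paper stays at modulus $d$ throughout: after the same first reduction to $d^{-(n-1)}\sum_{y:d\mid my}|\zeta_y(n-1)|$, it parametrizes the surviving $y$'s as $y\cdot d/g$ with $1\le y\le g$, invokes Prop.~\ref{pi_property} and Lemma~\ref{phi_divisors} to regroup by divisors with totient multiplicities, factorizes the cardinalities via Eq.~\ref{pi_factorization}, and then feeds the explicit formula of Theorem~\ref{thm_card_generic} into the binomial identity of Lemma~\ref{lem_identity}. You instead shrink the modulus from $d$ to $d/g$ immediately, read the result as a probability, split over primes by the Chinese Remainder Theorem, and compute the one-prime complementary probability by an elementary $p$-adic valuation count. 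Your route is more transparent and entirely bypasses Theorem~\ref{thm_card_generic} and Lemma~\ref{lem_identity}; the paper's route has the virtue of staying within the string-set formalism it has already built.

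One point you should make explicit: with $N=n-1$ your one-prime formula gives the factor
\[
1-(1-1/p_i)^{\,n-1}\sum_{j=0}^{\ell_i-k_i-1}p_i^{-j}\binom{n+j-2}{j},
\]
which is exactly the expression in Eq.~\ref{alpha_general} of Theorem~\ref{elementary_entanglement} and is also what the paper's own derivation produces after the final substitution $n\to n-1$. The statement of the proposition as printed carries $(1-1/p_i)^{n}$ and $\binom{n+j-1}{j}$; a quick check with $d=p$ prime and $m=1$ (where the left-hand side is $|\zeta_0(n-1)|/d^{n-1}=1-(1-1/p)^{n-1}$) shows this is a typographical slip. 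So your computation does not match ``the claim'' as literally written, but it matches what the claim should be and what is actually used downstream.
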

\begin{proof}
We use the property of the greatest common divisor $\gcd(ca,cb) = c \gcd (a,b)$ (for $c$ a non-negative integer), as well as a few identities and previous results, namely Eq.\,\ref{delta}, Eq.\,\ref{gcd_mod}, and Prop.\,\ref{pi_property}.  
The introduction of Euler's totient function follows the same reasoning as in Prop.\,\ref{prop_x_recursive}. Notice that in the following we use again the notation $\mathcal{D}(r)$ for the set of divisors of an integer $r$.
\begin{gather}
\begin{split}
    \frac{1}{d^n} 
    \sum_{y=0}^{d-1}  \sum_{t=0}^{d-1} \omega^{m \,t \, y}  \vert \zeta_{y} (n-1) \vert 
    &= 
    \frac{1}{d^{n-1}} \sum_{y = 0}^{d-1} \delta_{0, \, m y \bmod d} \vert \zeta_{y} (n-1) \vert \\
    &= 
    \frac{1}{d^{n-1}} \sum_{y = 1}^{d} \delta_{0, \, m y \bmod d} \vert \zeta_{y} (n-1) \vert \\
    &= \frac{1}{d^{n-1}} \sum_{y=1}^{\gcd (m,d)} \left| \zeta_{y \, d / \gcd ( m, d)} (n-1) \right| \\
    &= \frac{1}{d^{n-1}} \sum_{y=1}^{\gcd (m,d)} \left| \zeta_{ \gcd \left( y \, d / \gcd ( m, d), \, d \right)} (n-1) \right| \\
    &= \frac{1}{d^{n-1}} \sum_{y=1}^{\gcd (m,d)} \left| \zeta_{ \gcd \left( y , d \right) \, d / \gcd ( m, d)}  (n-1) \right| \\
    &= \frac{1}{d^{n-1}} 
    \sum_{y \in \mathcal{D} (\gcd(m,d)) } \varphi \left(  \gcd ( m,d) / y \right) \, \left| \zeta_{y \, d / \gcd(m,d) } (n-1) \right| \,.
\end{split}
\end{gather}
Using the prime factorizations of $d$, $\gcd(m,d)$ and $y$, and factorizing the cardinalities of string-sets as in Eq.\,\ref{pi_factorization}, the previous equation becomes 
\begin{gather} \label{intermediate_app_b}
\begin{split}
    \frac{1}{d^n} 
  \sum_{y=0}^{d-1}  \sum_{t=0}^{d-1} \omega^{m \, t \, y}  \vert \zeta_{y} (n-1) \vert &= 
    \frac{1}{d^{n-1}} 
    \sum_{j_1=0}^{k_1}  \sum_{j_2=0}^{k_2} \dots \sum_{j_{| \mathcal J | }=0}^{k_{| \mathcal J | }}
    \varphi \left( \prod_{i \in \mathcal J} p_i^{k_i-j_i} \right)
    \left| \zeta_{\prod_{i\in \mathcal J} p_i^{j_i + \ell_i - k_i} } (n-1) \right|  \\
    &= 
    \frac{|\zeta_1 (n-1)|}{d^{n-1}} 
    \sum_{j_1=0}^{k_1} \sum_{j_2=0}^{k_2} \dots \sum_{j_{| \mathcal J | }=0}^{k_{| \mathcal J | }}
    \prod_{i \in \mathcal J}
    \frac{\varphi (p_i^{k_i-j_i} ) }{|\zeta_1 (n-1)|}
    \left| \zeta_{ p_i^{j_i + \ell_i - k_i} } (n-1) \right|  \\
    &= 
    \frac{ |\zeta_1 (n-1)| }{d^{n-1}} 
    \prod_{i \in \mathcal J}  \sum_{j=0}^{k_i}
    \frac{\varphi (p_i^{k_i-j} ) }{|\zeta_1 (n-1)|}
    \left| \zeta_{ p_i^{j + \ell_i - k_i} } (n-1) \right| \,.
\end{split}
\end{gather}

Let us focus on the sum. We express the cardinalities of string-sets explicitly, Theor.\,\ref{thm_card_generic}, and use the identity in Lemma\,\ref{lem_identity}.

\begin{gather}
\begin{split}
    &\sum_{j=0}^{k_i}
    \frac{\varphi (p_i^{k_i - j})}{ | \zeta_1 (n) | }    
    \left| \zeta_{p_i^{\ell_i-k_i+j}} (n) \right|  \\  
    &= (1-\delta_{k_i,0})  
    \sum_{j=0}^{k_i-1} 
    \frac{\varphi (p_i^{k_i - j})}{ | \zeta_1 (n) | }
    \left| \zeta_{p_i^{\ell_i-k_i-j}} (n) \right| + \frac{\left| \zeta_{p_i^{\ell_i}} (n) \right|}{| \zeta_1 (n) |} \\
    &= (1-\delta_{k_i,0}) \, \varphi(p_i^{k_i}) \sum_{j=0}^{k_i-1} 
    p_i^{-j} \binom{n+\ell_i-k_i+j-1}{\ell_i-k_i+j} 
    + \sum_{j=0}^{n-1} \binom{n+\ell_i-j-2}{\ell_i-1}
    \left( 1 - \frac{1}{p_i} \right)^{-j} \\
    &= (1-\delta_{k_i,0}) \, \varphi(p_i^{\ell_i}) \sum_{j=0}^{k_i-1} 
    p_i^{-(\ell_i-k_i+j)} \binom{n+\ell_i-k_i+j-1}{\ell_i-k_i+j} \\
    &+ \left( 1 - \frac{1}{p_i} \right)^{-n+1} p_i^{\ell_i}
    - \left( 1 - \frac{1}{p_i} \right) p_i^{\ell_i} \, 
    \sum_{j=0}^{\ell_i-1} p_i^{-j} \binom{n+j-1}{j} \\ 
    &= p_i^{\ell_i} \left( 1 - \frac{1}{p_i} \right)
    \left[ 
    \left( 1 - \frac{1}{p_i} \right)^{-n}  +  
    (1-\delta_{k_i,0}) \sum_{j=\ell_i-k_i}^{\ell_i-1} 
    p_i^{-j} \binom{n+j-1}{j} -
    \sum_{j=0}^{\ell_i-1} p_i^{-j} \binom{n+j-1}{j}
    \right] \\ 
    &= p_i^{\ell_i} \left( 1 - \frac{1}{p_i} \right)
    \left[ 
    \left( 1 - \frac{1}{p_i} \right)^{-n}  -  
    (1-\delta_{k_i, \ell_i}) 
    \sum_{j=0}^{\ell_i-k_i-1} p_i^{-j} \binom{n+j-1}{j}
    \right] \,.
\end{split}
\end{gather}

Substituting this back into Eq.\,\ref{intermediate_app_b}, and calculating $|\zeta_1 (n-1)|$ (see Eq.\,\ref{pi_coprime}), proves the proposition.

\end{proof}

\def\thesection{Appendix \Alph{section}}  
\section{Properties of entanglement of elementary hypergraph states}
\renewcommand{\thesection}{\Alph{section}}

We prove two properties of multipartite entanglement of elementary hypergraph states. The first one is for given dimension and multiplicity of the hyperedge; the second one is for given dimension and number of qudits.

\begin{prop} \label{ent_mono}
For given dimension $d$ and multiplicity  of the hyperedge $m_e$, the multipartite entanglement of elementary hypergraph states is strictly monotonically decreasing in the number of qudits $n$, or
\begin{equation}
    E(| G^{m_e}_{n+1} \rangle) <  E(| G^{m_e}_{n} \rangle) \,.
\end{equation}
\end{prop}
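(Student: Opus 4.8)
Since $E=1-\alpha$, the assertion is equivalent to $\alpha(|G_{n+1}^{m_e}\rangle)>\alpha(|G_{n}^{m_e}\rangle)$ for every $n\ge 2$, and the plan is to prove this directly from the closed form of Theorem \ref{elementary_entanglement}. The factors in the product \ref{alpha_general} indexed by $\mathcal J_2$ (where $k_i=\ell_i$) are equal to $1$; assuming $m_e\not\equiv 0\bmod d$ (otherwise $|G_n^{m_e}\rangle=|+\rangle^V$ and $E\equiv 0$), the set $\mathcal J_1$ is non-empty and
\[
\alpha(|G_n^{m_e}\rangle)=\prod_{i\in\mathcal J_1}g_i(n),\qquad
g_i(n):=1-\left(1-\frac{1}{p_i}\right)^{n-1}\sum_{j=0}^{L_i-1}p_i^{-j}\binom{n+j-2}{j},\quad L_i:=\ell_i-k_i\ge 1 .
\]
I would then isolate two facts about a single factor: (i) $0<g_i(n)<1$ for all $n\ge 2$, and (ii) $g_i(n+1)>g_i(n)$. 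Given these, $\alpha(n+1)=\prod_{i\in\mathcal J_1}g_i(n+1)>\prod_{i\in\mathcal J_1}g_i(n)=\alpha(n)$, because replacing the positive factors one at a time by strictly larger positive ones strictly increases the product; here non-emptiness of $\mathcal J_1$ is what makes the inequality strict.

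Fact (i) should be routine. Writing $c_i:=1-1/p_i$ and $a_i:=1/p_i=1-c_i$, the binomial series identity of Eq.\,\ref{expansion_binomial} — applicable since the exponent $n-1$ is at least $1$ — gives $\sum_{j=0}^{L_i-1}a_i^{\,j}\binom{n+j-2}{j}<\sum_{j\ge 0}a_i^{\,j}\binom{n+j-2}{j}=c_i^{-(n-1)}$, so multiplying by $c_i^{\,n-1}$ yields $1-g_i(n)<1$, i.e.\ $g_i(n)>0$; and $g_i(n)<1$ is immediate because the subtracted quantity is strictly positive.

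Fact (ii) is the crux. Abbreviating $S_L(m):=\sum_{j=0}^{L-1}a_i^{\,j}\binom{m+j-2}{j}$, one has $g_i(n+1)-g_i(n)=c_i^{\,n-1}S_{L_i}(n)-c_i^{\,n}S_{L_i}(n+1)$, so (ii) is equivalent to the combinatorial inequality $S_L(n)>c_i\,S_L(n+1)$. My plan is to apply Pascal's rule $\binom{n+j-1}{j}=\binom{n+j-2}{j}+\binom{n+j-2}{j-1}$ to obtain the recursion $S_L(n+1)=S_L(n)+a_i\,S_{L-1}(n+1)$; using $1-c_i=a_i$, this rewrites the target as $S_L(n)>c_i\,S_{L-1}(n+1)$, which then follows by induction on $L$ from $S_L(n)\ge S_{L-1}(n)$ (an extra non-negative term) together with the inductive hypothesis $S_{L-1}(n)>c_i\,S_{L-1}(n+1)$, the base case $L=1$ being simply $1>c_i$. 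As a consistency check and alternative viewpoint, $c_i^{\,m-1}S_L(m)$ equals the probability that a negative-binomial count of failures before the $(m-1)$-th success with success probability $c_i$ is at most $L-1$, so (ii) expresses the stochastic dominance of $\mathrm{NB}(n,c_i)$ over $\mathrm{NB}(n-1,c_i)$ (strict on the relevant tail). The one non-routine point is thus the single-factor monotonicity $S_L(n)>c_i\,S_L(n+1)$; everything else is bookkeeping with Theorem \ref{elementary_entanglement} and a convergent geometric comparison.
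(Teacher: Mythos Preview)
Your proof is correct and follows essentially the same route as the paper: reduce the monotonicity of $\alpha$ to the single-factor inequality $c_i\,S_L(n+1)<S_L(n)$ and establish the latter via Pascal's rule (you induct on $L$, the paper writes out the same telescoping directly and compares two sums differing by one positive term). Your explicit verification that each factor $g_i(n)$ lies in $(0,1)$---needed to pass from factor-wise to product-wise strict monotonicity---is a step the paper leaves implicit, and the negative-binomial reading is a pleasant bonus.
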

\begin{proof}
Proving the proposition, amounts to proving that for every $k_i < \ell_i$ (by construction there has to be at least one), 
\begin{equation}
    \left( 1 - \frac{1}{p_i} \right)^{n} \,
    \sum_{j=0}^{\ell_i-k_i-1} p_i^{-j} \binom{n+j-1}{j}
    < 
    \left( 1 - \frac{1}{p_i} \right)^{n-1} \,
    \sum_{j=0}^{\ell_i-k_i-1} p_i^{-j} \binom{n+j-2}{j} \,.
\end{equation}
For $k_i=\ell_i-1$ the proof is trivial. Let us consider the case $k_i<\ell_i-1$. Using the recurrence relation of binomial coefficients, we have
\begin{equation}
\begin{aligned}
    \left( 1 - \frac{1}{p_i} \right) \,
    \sum_{j=0}^{\ell_i-k_i-1} p_i^{-j} \binom{n+j-1}{j}
    &< 
    \sum_{j=0}^{\ell_i-k_i-1} p_i^{-j} \binom{n+j-2}{j} \\
    \sum_{j=0}^{\ell_i-k_i-1} p_i^{-j} \binom{n+j-1}{j}  -\sum_{j=0}^{\ell_i-k_i-1} p_i^{-j} \binom{n+j-2}{j}
    &< 
    \sum_{j=0}^{\ell_i-k_i-1} p_i^{-j+1} \binom{n+j-1}{j}
     \\
    \sum_{j=1}^{\ell_i-k_i-1} p_i^{-j} \binom{n+j-2}{j-1}
    &< 
    \sum_{j=0}^{\ell_i-k_i-1} p_i^{-j+1} \binom{n+j-1}{j}
     \\
    \sum_{j=0}^{\ell_i-k_i-2} p_i^{-j+1} \binom{n+j-1}{j}
    &< 
    \sum_{j=0}^{\ell_i-k_i-1} p_i^{-j+1} \binom{n+j-1}{j} \,.
     \\
\end{aligned}
\end{equation}
\end{proof}

\begin{prop} \label{ent_multi}
For given dimension $d$ and number of qudits $n$, among the elementary hypergraph states having hyperedge multiplicity multiple of a same divisor $m^*$ of $d$, the one having multiplicity equal to $d/\operatorname{lpf}(m^*)$ is minimally entangled, or
\begin{equation}
    \min_{1 \le \beta < d/m^*} \left( E(| G^{\beta  m^*}_{n} \rangle) \right) =  E(| G^{d/\operatorname{lpf}(d / m^*)}_{n} \rangle) \,.
\end{equation}
\end{prop}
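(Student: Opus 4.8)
The argument I would give rests on the observation, recorded after Theor.\,\ref{elementary_entanglement}, that $E(|G_n^{m_e}\rangle)$ depends on $m_e$ only through $\gcd(m_e,d)$. Write $d=\prod_{i\in\mathcal J}p_i^{\ell_i}$ and $m^*=\prod_{i\in\mathcal J}p_i^{a_i}$ with $0\le a_i\le\ell_i$ (and $m^*<d$, so that the range of $\beta$ is nonempty). From $\gcd(\beta m^*,d)=m^*\gcd(\beta,d/m^*)$ and the elementary fact that, as $\beta$ runs over $\{1,\dots,d/m^*-1\}$, the quantity $\gcd(\beta,d/m^*)$ runs over exactly the divisors of $d/m^*$ different from $d/m^*$ itself, one sees that the set of values attained by $\gcd(\beta m^*,d)$ is precisely the set of divisors $D$ of $d$ with $m^*\mid D$ and $D\neq d$. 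Hence $\min_{1\le\beta<d/m^*}E(|G_n^{\beta m^*}\rangle)$ equals the minimum of the right‑hand side of Eq.\,\ref{entanglement_general} over all such $D=\prod_{i\in\mathcal J}p_i^{k_i}$, i.e.\ over all exponent tuples with $a_i\le k_i\le\ell_i$ for every $i$ and $k_i<\ell_i$ for at least one $i$.

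The second step is this discrete optimization, carried out through the product structure of Eq.\,\ref{entanglement_general}. Putting $f_i(k_i):=(1-\delta_{k_i,\ell_i})(1-1/p_i)^{n-1}\sum_{j=0}^{\ell_i-k_i-1}p_i^{-j}\binom{n+j-2}{j}$, one has $\alpha(|G_n^{m_e}\rangle)=1-E(|G_n^{m_e}\rangle)=\prod_{i\in\mathcal J}(1-f_i(k_i))$. Choosing an $m_e$ whose gcd with $d$ has $i$‑th exponent $k_i$ and all other exponents equal to $\ell_i$ exhibits $1-f_i(k_i)$ itself as a biseparable overlap $\alpha(\cdot)\in(0,1]$, so every factor lies in $(0,1]$; moreover $f_i$ is non‑increasing in $k_i$ and vanishes exactly when $k_i=\ell_i$. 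Consequently the product is largest — hence $E$ smallest — when as many factors as the constraint allows are equal to $1$: one keeps $k_i=\ell_i$ for all indices but a single $i_0$, for which one takes $k_{i_0}=\ell_{i_0}-1$ (this forces $a_{i_0}<\ell_{i_0}$, i.e.\ $p_{i_0}\mid d/m^*$). For that tuple the product collapses to $1-f_{i_0}(\ell_{i_0}-1)=1-(1-1/p_{i_0})^{n-1}$, since $\binom{n-2}{0}=1$, so $E=(1-1/p_{i_0})^{n-1}$.

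Finally, minimizing over the admissible $i_0$ and using that $x\mapsto(1-1/x)^{n-1}$ is increasing for $n\ge2$, the optimal choice is $p_{i_0}=\operatorname{lpf}(d/m^*)$, giving $\min_{1\le\beta<d/m^*}E(|G_n^{\beta m^*}\rangle)=(1-1/\operatorname{lpf}(d/m^*))^{n-1}$. To close, I would check that the multiplicity $d/\operatorname{lpf}(d/m^*)$ is of the admissible form — it is a multiple of $m^*$ because $\operatorname{lpf}(d/m^*)\mid d/m^*$, and a proper divisor of $d$ — with $\gcd\bigl(d/\operatorname{lpf}(d/m^*),\,d\bigr)=d/\operatorname{lpf}(d/m^*)$ realizing exactly the optimal exponent pattern; substituting into Eq.\,\ref{entanglement_general} returns $(1-1/\operatorname{lpf}(d/m^*))^{n-1}$, identifying the minimum with $E(|G_n^{d/\operatorname{lpf}(d/m^*)}\rangle)$. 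The only slightly delicate points are the bookkeeping of which divisors of $d$ arise as $\gcd(\beta m^*,d)$ and the justification that each factor $1-f_i(k_i)$ is non‑negative (so that zeroing out more factors can only decrease the product); once the product structure of Eq.\,\ref{entanglement_general} is isolated the optimization is routine, so I do not expect a serious obstacle.
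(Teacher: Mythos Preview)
Your argument is correct and follows essentially the same route as the paper's proof: both reduce the minimization to optimizing the product in Eq.~\ref{entanglement_general} over the admissible exponent tuples, observe that $\alpha$ is maximal when all factors but one equal $1$ with the remaining exponent at $\ell_{i_0}-1$, and then pick $p_{i_0}=\operatorname{lpf}(d/m^*)$. Your version is in fact more carefully argued than the paper's --- you explicitly characterize which divisors arise as $\gcd(\beta m^*,d)$ and justify that each factor $1-f_i(k_i)$ lies in $(0,1]$, points the paper's proof passes over rather quickly.
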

\begin{proof}
Using the prime factors of $d$, $\{ p_{i \in \mathcal J} \}$, we factorize $m^*$ and $d$ as
\begin{equation}
    m^*=\prod_{i \in \mathcal J} p_i^{k_i} \,, \quad
    d = \prod_{i \in \mathcal J} p_i^{\ell_i} \,.
\end{equation}
For Theor.\,\ref{elementary_entanglement}, the states $| G^{\beta m^*}_{n} \rangle$ such that $\beta$ is coprime to $d$ have the same entanglement as $| G^{m^*}_{n} \rangle$; those such that $\beta$ is not coprime to $d$ have lower entanglement. Among these latter, a minimally entangled one has to be such that $\gcd(\beta m^*, d)=d/p_i$, with $p_i$ a prime factor of $d$. In this way, the only element which is not one in the product in Eq.\,\ref{entanglement_general} is that for the index $i$. Among the $p_i$'s, the one which minimizes entanglement, is the smallest one such that $k_i<\ell_i$ in the factorization above, namely $p_i = \operatorname{lpf} (d / m^*)$. This corresponds to a $\beta = d / \left( m^* \, \operatorname{lpf} (d / m^*) \right) $.
\end{proof}

\section*{References}

\bibliographystyle{unsrt.bst}
\bibliography{bibliography}

\end{document}